\title{Cubic Planar Graphs That Cannot Be Drawn On Few Lines}
\author{David Eppstein}{Computer Science Department, University of California, Irvine, USA}{eppstein@uci.edu}{}{Supported in part by NSF grants  CCF-1618301 and CCF-1616248.}
\authorrunning{David Eppstein}
\keywords{graph drawing, universal point sets, collinearity}
\begin{document}
\maketitle  

\begin{abstract}
For every integer $\ell$, we construct a cubic 3-vertex-connected planar bipartite graph $G$
with $O(\ell^3)$ vertices such that there is no planar straight-line drawing of $G$ whose vertices all lie on $\ell$ lines. This strengthens previous results on graphs that cannot be drawn on few lines, which constructed significantly larger maximal planar graphs. We also find apex-trees and cubic bipartite series-parallel graphs that cannot be drawn on a bounded number of lines.
\end{abstract}

\section{Introduction}

A number of works in graph drawing and network visualization have considered drawing graphs with line segments as edges and with the vertices placed on few lines, or on a minimal number of lines. Even very strong constraints, such as restricting the vertices of a drawing to only two lines, allow many graphs to be drawn~\cite{FirLipStr-GD-18}: every \emph{weakly leveled}  graph drawing (a planar drawing on any number of parallel lines with every edge connecting two vertices on the same or adjacent lines) can be converted into a  drawing on two crossing lines that spirals around the crossing. This conversion allows, for instance, all trees, all outerplanar graphs,  all Halin graphs, all squaregraphs (graphs in which all bounded faces have exactly four sides and all vertices not on the unbounded face have at least four neighbors) and all grid graphs (\autoref{fig:2line-grid}) to be drawn on two lines~\cite{BanDevDuj-Algo-18,FelLioWis-JGAA-03,BanCheEpp-SIDMA-10}.

\begin{figure}[t]
\centering\includegraphics[width=0.5\textwidth]{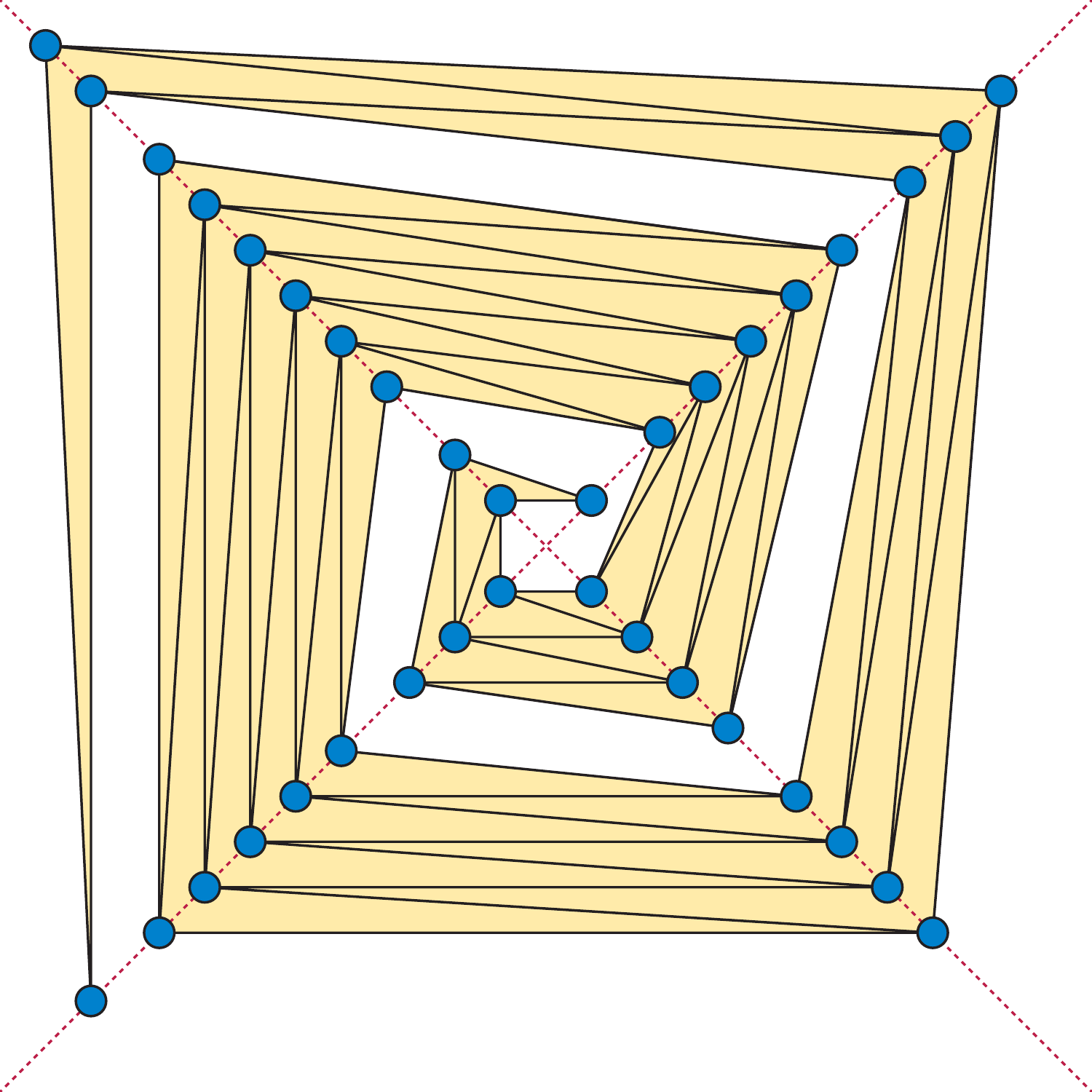}
\caption{Spiraling around the central crossing allows every weakly leveled drawing (in this case, a drawing of a grid graph) to be converted to a drawing with all vertices on two lines.}
\label{fig:2line-grid}
\end{figure}

Additional past results in this area include:
\begin{itemize}
\item Fixed-parameter tractable algorithms for drawing planar graphs without crossings with
all vertices on $\ell$ parallel lines, based on the fact that a graph with such a drawing must have pathwidth $O(\ell)$~\cite{DujFelKit-Algo-08}.
\item NP-hardness of crossing minimization for graphs drawn with vertices on two parallel lines (with a fixed assignment of vertices to lines but variable placement of each vertex along each line) and of finding large crossing-free subgraphs~\cite{EadWhi-TCS-94,EadWor-Algo-94}.
\item NP-hardness of recognizing the graphs that can be drawn without crossing with all vertices on three parallel but non-coplanar lines in three-dimensional space, or on three rays in the plane with a common apex and bounding three wedges with angles less than~$\pi$~\cite{BanDevDuj-Algo-18}.
More generally, the number of parallel three-dimensional lines (in sufficiently general position) needed for a crossing-free drawing of a graph is the \emph{track number} of the graph~\cite{DujPorWoo-DMTCS-04,DujWoo-DMTCS-05}. It is closely related to the volume of three-dimensional grid drawings, and can be bounded by the pathwidth of the graph~\cite{DujMorWoo-GD-02}.
\item $\exists\mathbb{R}$-completeness and fixed-parameter tractability of deciding whether a given graph can be drawn without crossing with all edges on $\ell$ lines (not required to be parallel) in two or three dimensions~\cite{ChaFleLip-WADS-17}.
\item Implementation of a tester for drawing graphs without crossings on two lines using integer linear programming and SAT solvers, and an examination of the subclasses of planar graphs that can be drawn without crossings with all vertices on two lines~\cite{FirLipStr-GD-18}.
\item The existence of families of planar graphs that cannot be drawn without crossings on any fixed number of lines, no matter how the lines are arranged in the plane~\cite{RavVer-WG-11,ChaFleLip-GD-16,Epp-18}.
\end{itemize}

In this paper we strengthen the final result of this listing, the existence of families of planar graphs that cannot be drawn on any fixed number of lines, in two ways.

First, we greatly improve the size bounds for these difficult-to-draw graphs.
The previous bounds of Ravsky, Chaplick, et al.~\cite{RavVer-WG-11,ChaFleLip-GD-16} are based on the observation that, in a maximal planar graph, a line through $q$ vertices implies the existence of a path in the dual graph of length $\Omega(q)$. However, there exist $n$-vertex maximal planar graphs for which the longest dual path has length $O(n^c)$ for some constant $c<1$ called the \emph{shortness exponent}.
In these graphs, at most $O(n^c)$ vertices can lie on one line, so the number of lines needed to cover the vertices of any drawing of such a graph is $\Omega(n^{1-c})$. Based on this reasoning, they showed the existence of $n$-vertex graphs requiring $O(n^{0.01})$ lines to cover the vertices of any drawing.
Inverting this relationship, graphs that cannot be drawn on $\ell$ lines can have a  number of vertices that is only polynomial in $\ell$, but that polynomial is roughly $\ell^{100}$.
Alternatively, it can be proven by a straightforward induction that a special class of maximal planar graphs, the planar 3-trees, cannot be drawn on a constant number of lines, but the proof only shows that the required number of lines for these graphs is at least logarithmic~\cite{Epp-18}. Inverting this relationship, the graphs of this type that cannot be drawn on $\ell$ lines have size exponential in~$\ell$.
In this paper, we prove polynomial bounds with a much smaller exponent than $100$.

\begin{figure}[t]
\centering\includegraphics[width=0.35\textwidth]{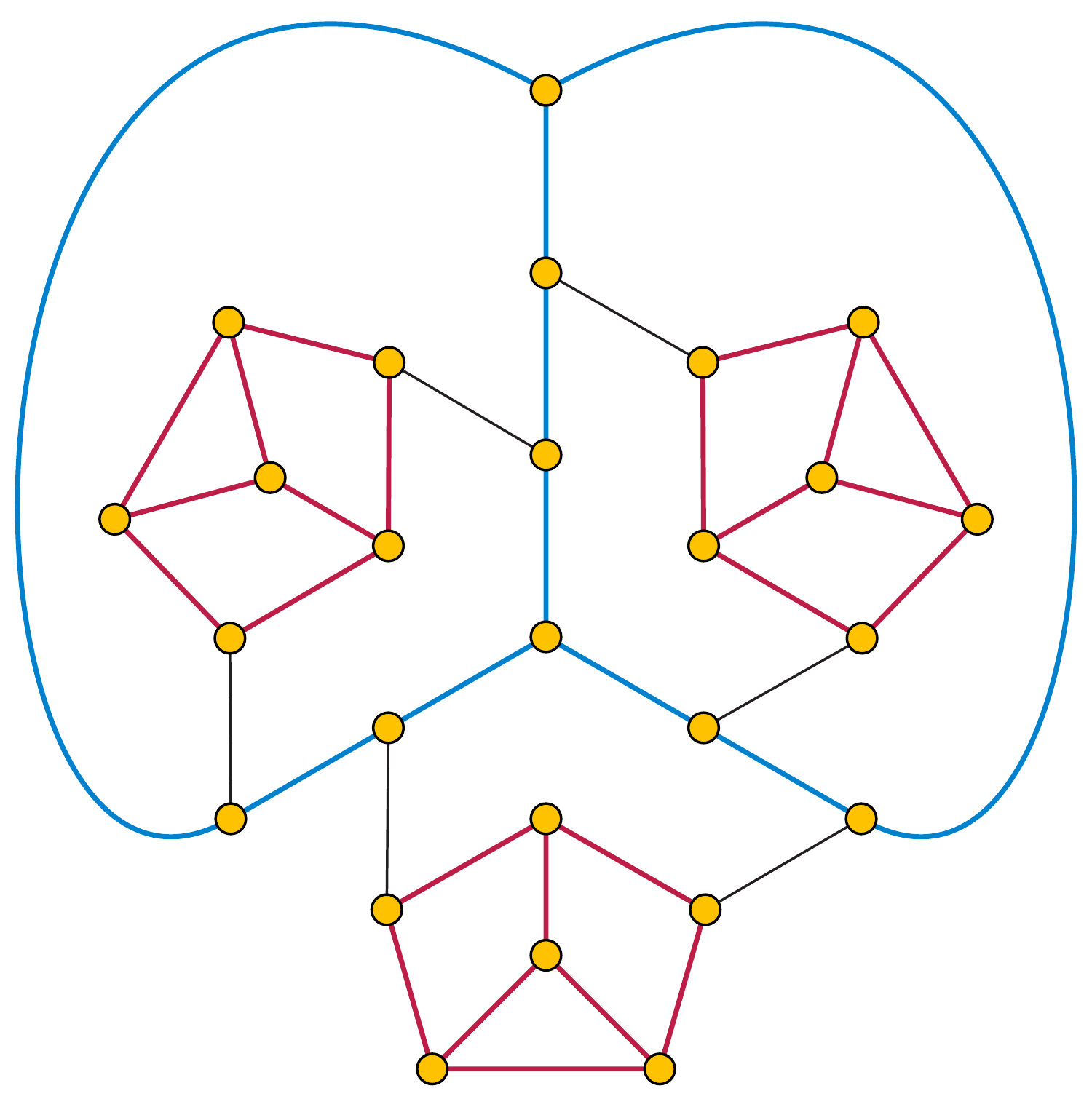}
\caption{A cubic planar graph that cannot be drawn on two lines,
giving a counterexample to a conjecture of Firman et al.~\cite{FirLipStr-GD-18}. Its inability to be drawn on two lines has been verified computationally by Firman et al.}
\label{fig:2line-counterexample}
\end{figure}

Second, we show that the property of requiring many lines extends to a broader class of graphs.
Both classes of counterexamples discussed above involve maximal planar graphs, graphs in which every face of the embedding is a triangle. Maximality seems a necessary part of the proofs based on the shortness exponent, as the connection between numbers of vertices on a single line and lengths of dual paths does not necessarily hold for other classes of planar graphs. In contrast, based on computational experiments, Firman et al.~\cite{FirLipStr-GD-18} conjectured that cubic (that is, 3-regular) planar graphs can always be drawn without crossings on only two lines.  The cubic graphs are distinct from the known examples of graphs that require their vertices to lie on many lines, as a maximal planar graph larger than $K_4$ cannot be cubic.  Their conjecture inspired the present work, and a counterexample to it  found by the author (\autoref{fig:2line-counterexample}) led to our main results. In this work, we provide examples of graphs that require many lines but are cubic, providing stronger counterexamples to the conjecture of Firman et al. Moreover these graphs do not contain any triangles, showing that the presence of triangles is not a necessary component of graphs that require many lines.

More specifically, we prove:

\begin{theorem}
\label{thm:cubic}
For every $\ell$ there exists a graph $G_\ell$ that is cubic, 3-vertex-connected, planar, and bipartite, with $O(\ell^3)$ vertices, such that  every straight-line planar drawing of $G_\ell$ requires more than $\ell$ lines to cover all vertices of the drawing.
\end{theorem}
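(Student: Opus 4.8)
The plan is to build $G_\ell$ as a \emph{recursively nested} family of small cubic gadgets, so that every planar embedding is forced to contain a long chain of nested separating cycles, and then to argue that a drawing on $\ell$ lines cannot accommodate a chain that is too long. Because the theorem demands the graph be $3$-vertex-connected, I would first exploit Whitney's theorem: a $3$-connected planar graph has an essentially unique combinatorial embedding (up to reflection and choice of outer face), so any structure I can guarantee combinatorially---in particular a sequence of vertex-disjoint cycles $C_1, C_2, \dots, C_k$ in which $C_{i+1}$ is enclosed by $C_i$---must appear, correctly nested, in every planar straight-line drawing.

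For the construction I would start from a small cubic, bipartite, $3$-connected gadget playing the role of a single ``annulus'' between two consecutive separating cycles---morally the frame of the two-line counterexample of \autoref{fig:2line-counterexample}---and then glue $\Theta(\ell^2)$ copies of it concentrically, wiring the inner boundary of one copy to the outer boundary of the next. The gluing must be done so that the result stays cubic (each boundary vertex keeps degree exactly three, its third edge serving as a ``rung'' between consecutive layers), stays bipartite (all layer cycles and rungs must have consistent parity, forcing the cycle lengths and rung placements to align), and stays $3$-connected (so that no layer can be detached or re-embedded). Giving each layer $O(\ell)$ vertices then yields $|V(G_\ell)| = O(\ell^3)$, matching the claimed bound, with forced nesting depth $k = \Theta(\ell^2)$.

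The heart of the proof is a \emph{charging argument} bounding how many nested layers a single line can serve. Suppose for contradiction that $G_\ell$ is drawn with all vertices on a set $\mathcal L$ of $\ell$ lines. The target lemma is that each line $L \in \mathcal L$ contains covering vertices from only $O(\ell)$ of the nested cycles $C_i$: since $L$ meets the other $\ell-1$ lines in at most $\ell-1$ points, it is cut into $O(\ell)$ pieces, and I would like to show that within one such piece $L$ can interact with only $O(1)$ of the concentric layers. Summing over all $\ell$ lines accounts for only $O(\ell^2)$ layers, contradicting $k = \Theta(\ell^2)$ once the constants are chosen so that the true depth strictly exceeds the capacity of $\ell$ lines---which is what forces \emph{more} than $\ell$ lines.

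The main obstacle is precisely this capacity lemma, and the difficulty is non-convexity: the separating cycles are drawn as arbitrary polygons whose \emph{vertices} lie on $\mathcal L$ but whose \emph{edges} are free segments, so a priori a single line can cross a cycle boundary many times and oscillate between many nesting depths. Making the charging argument go through therefore requires engineering the gadget so that each layer forces a geometric regularity that a line arrangement can only supply a bounded number of times per piece---for instance, forcing each layer's ``crossing'' structure to consume a distinct vertex of the arrangement $\mathcal L$, or forcing the relevant vertices of each separator into convex position so that each line crosses each layer boundary at most twice. Pinning down this per-layer geometric invariant, and verifying that a cubic, bipartite, $3$-connected gadget can simultaneously enforce it while maintaining all four structural constraints, is where I expect essentially all of the work to lie; the recursive size accounting and the appeal to Whitney's theorem are comparatively routine.
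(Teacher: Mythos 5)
There is a genuine gap, and it is fatal to the proposed architecture rather than just to a missing lemma. Your construction is a \emph{single} concentric tower: $\Theta(\ell^2)$ layers glued so that each layer's inner boundary is wired by rungs to the next layer's outer boundary. Any graph of this shape admits a weakly leveled drawing (take the layers as the levels; every edge joins vertices on the same or consecutive levels), and as recalled in the paper's introduction (\autoref{fig:2line-grid}), every weakly leveled drawing can be converted into a planar straight-line drawing on \emph{two} crossing lines by spiraling around the crossing. Concretely: draw the layers as concentric convex hexagons centered at the common crossing point of a few concurrent lines, and each line serves \emph{all} $\Theta(\ell^2)$ layers at once. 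This also shows that your hoped-for capacity lemma (``each piece of a line interacts with $O(1)$ layers'') is false even under the convexity fix you suggest --- with concentric convex layers each line crosses each layer boundary exactly twice, yet a constant number of lines covers everything. The obstruction you flagged as a technical difficulty (non-convexity, oscillation) is not the real issue; the real issue is that your nest surrounds a crossing point of the arrangement, and no per-layer gadget engineering can prevent that for a single tower.

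The idea your proposal is missing is a pigeonhole over \emph{many disjoint} nests versus crossing points of the arrangement. The paper builds $\binom{\ell}{2}+2$ separate subunits, each a nest of only $O(\ell)$ hexagons, joined so the whole graph is cubic, bipartite, and 3-connected (hence uniquely embedded, the one point where your appeal to Whitney's theorem matches the paper). Since $\ell$ lines have at most $\binom{\ell}{2}$ crossings and at most one subunit can contain the outer face, some subunit is drawn in a region containing \emph{no} crossing point. Only under that hypothesis does a capacity argument work, and it is then quite different from charging along lines: inside a crossing-free polygon the line segments cut the interior into regions whose adjacency graph is a tree (\autoref{lem:region-tree}), each nested polygon must avoid the two leaf segments and hence consumes at least two segments (\autoref{lem:use-two}), and a nest of depth roughly $3\ell/2$ already exhausts the $\ell\cdot\lfloor p/2\rfloor$ available segments (\autoref{lem:deep-nest}). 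So the correct size accounting is $\Theta(\ell^2)$ subunits $\times$ $O(\ell)$ depth each, not one tower of depth $\Theta(\ell^2)$; without the many-disjoint-nests pigeonhole, no choice of gadget along the lines you describe can succeed.
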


Additionally, we can construct $G_\ell$ in such a way that it is drawable with its vertices on $O(\ell)$ lines, or such that it has bounded pathwidth. 
In particular, this proves that the relation between pathwidth and number of lines proven by Dujmovi{\'c} et al.~\cite{DujFelKit-Algo-08} is not bidirectional. Every graph that can be drawn without crossings on a given number of parallel lines has  pathwidth bounded linearly in the number of lines, but
the number of lines needed to draw a planar graph cannot be bounded by a function of its pathwidth, regardless of whether we constrain the lines to be parallel.

Using similar methods, we also prove:

\begin{theorem}
\label{thm:treelike}
For every $\ell$ there exists a subcubic series-parallel graph that cannot be drawn with its vertices on $\ell$ lines, and an apex-tree (a graph formed by adding one vertex to a tree)  that cannot be drawn with its vertices on $\ell$ lines.
\end{theorem}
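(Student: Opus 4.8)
The plan is to build both graphs by the recursive ``nesting'' template behind \autoref{thm:cubic}, replacing its $3$-connected cubic gadgets with gadgets adapted to each target class, and to prove the line bound by induction on the recursion depth. Since \autoref{thm:treelike} claims only existence and imposes no size bound, I would use an exponential-size recursive doubling in which each level is responsible for exactly one additional line; this is the same flavor of argument as the logarithmic lower bound for planar $3$-trees recalled above, and it is considerably easier to control than the $O(\ell^3)$-size construction of \autoref{thm:cubic}.

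For the series-parallel graph I would define a two-terminal family $(H_k;s_k,t_k)$ with $H_0$ a single edge and $H_{k+1}$ assembled from several copies of $H_k$ by series and parallel compositions together with short ``separating'' paths. Series-parallel graphs are closed under these two operations, so $H_k$ is automatically $K_4$-minor-free; the only bookkeeping is to keep the graph subcubic, which I would achieve by routing the parallel copies through balanced binary trees of degree-three vertices and by subdividing edges where needed. The inductive invariant I would carry asserts that at least $k+1$ lines are needed and also controls which lines may cover the two terminals, so that the statement composes cleanly under the series and parallel operations; taking $k=\ell$ then produces a subcubic series-parallel graph that no $\ell$ lines can cover.

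For the apex-tree the non-apex part must be acyclic, so nesting cannot be enforced by cycles and must instead be supplied by the apex. I would take a branching tree $T_k$ (a complete binary tree of depth $\Theta(k)$ suffices) and adjoin a single apex vertex $a$ adjacent to enough tree vertices that $a$ together with the tree edges forces the branches to appear as properly nested regions in every planar embedding. Even the universal apex already makes the graph non-outerplanar --- any internal tree vertex of degree three together with $a$ yields a $K_{2,3}$ subgraph --- so the two-line bound available for outerplanar graphs does not apply here. Deleting $a$ recovers $T_k$, so this is an apex-tree by definition, and the inductive claim mirrors the series-parallel one.

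The crux in both cases, and the step I expect to be the main obstacle, is a geometric separation lemma driving the induction: at each recursion level the sibling sub-copies must lie in regions so separated that no single line can usefully cover two of them at once, so the lines partition across the children and a pigeonhole argument charges at least one fresh line to each level. The difficulty peculiar to this sparse, non-maximal setting is that the boundary of a nested region need not be convex, so \emph{a priori} a single wiggly line might pick up many vertices of one sub-copy and spill into a sibling; the separating paths in the series-parallel construction and the apex edges in the apex-tree construction are precisely the devices that rule this out, by guaranteeing that any line reaching deep into one child must cross a boundary isolating it from its siblings. Making this charging quantitatively exact --- one new line per level rather than one per several levels --- is the real work, and I would model it directly on the separation lemma underlying \autoref{thm:cubic}.
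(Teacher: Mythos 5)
Your constructions are essentially the paper's (a recursive two-terminal series-parallel family built by alternating series and parallel compositions, and a complete binary tree with subdivided edges plus an apex), but the proof mechanism you propose has a fatal flaw at exactly the step you identify as the crux. The ``separation lemma'' you want is false: in this problem the lines are not obstructed by the drawing---a straight line crosses any separating path or apex edge and simply continues---so nothing can prevent a single line from passing through, and covering vertices of, arbitrarily many disjoint sibling regions. Hence ``the lines partition across the children'' cannot be enforced by any gadget. Worse, the ``one fresh line per level'' charging is false even for a single nested family: as the paper notes (see \autoref{fig:hexnest} and the surrounding discussion), a nest of \emph{arbitrary} depth can be drawn with all vertices on just two crossing lines (for even cycle length; three for odd), by placing the central point at the crossing and spiraling around it. Depth of nesting alone never forces many lines, so no induction that charges one line per recursion level can be correct.

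The missing idea is that the scarce global resource is \emph{crossing points}, not lines: $\ell$ lines have at most $\tbinom{\ell}{2}$ crossings, and the recursion is used to produce at least $\tbinom{\ell}{2}+1$ \emph{disjointly embedded} nests of depth $\Omega(\ell)$ in every planar embedding (\autoref{lem:sp-nest} and \autoref{lem:sp-many-nests}), so by pigeonhole at least one nest encloses no crossing point at all. Only then does a per-level charging argument become available, and it charges \emph{segments}, not lines: inside a polygon containing no crossing, the lines cut the interior into at most $\ell\cdot\lfloor p/2\rfloor$ segments (\autoref{cor:few-segs}) whose region-adjacency graph is a tree (\autoref{lem:region-tree}), each nested cycle must consume at least two leaf segments (\autoref{lem:use-two}), and so a crossing-free nest of depth greater than roughly $\ell\cdot\lfloor p/2\rfloor/2$ cannot have all its vertices on the lines (\autoref{lem:deep-nest}). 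Your apex-tree case needs a further repair that your sketch does not address: an apex-tree contains no two vertex-disjoint cycles (every cycle passes through the apex), so nests in the strict sense cannot exist there; the paper has to relax the notion to ``near-nests'' (edge-disjoint cycles pairwise sharing at most one vertex) and reprove the counting lemma in that weaker form (\autoref{lem:deep-near-nest}) before the same crossing-point pigeonhole applies.
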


\autoref{thm:treelike} stands in contrast to the fact that all trees and all outerplanar graphs can be drawn on only two crossed lines.
In \autoref{thm:treelike}, both the series-parallel graph and apex-tree can be made to be bipartite, and the apex-tree can be made subcubic at all tree vertices.

These results lower the treewidth of the known graphs that cannot be drawn on a constant number of lines from three~\cite{Epp-18} to two, and they show that adding one vertex to a graph can change the number of lines needed to draw it from two to any larger number. The apex-tree graphs are also central to recent research characterizing the minor-closed families of bounded layered pathwidth~\cite{DujEppJor-ms-18}. However, for \autoref{thm:treelike} we do not have a polynomial size bound on the graphs that we construct; instead, they are exponential in size.
At least in the case of apex-trees, this exponential size blowup is necessary, as we finally prove:

\begin{theorem}
\label{thm:apex-tree-draw}
Every apex-tree with $n$ vertices has a planar embedding that can be drawn with its vertices on $O(\log n)$ parallel lines.
\end{theorem}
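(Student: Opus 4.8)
The plan is to build the drawing recursively, using the pathwidth of the underlying tree as the quantity that controls the number of lines, together with the fact that a tree on $n$ vertices has pathwidth $O(\log n)$. I would first observe that the logarithmic bound cannot be beaten even for the tree alone: since a drawing on $\ell$ parallel lines forces pathwidth $O(\ell)$~\cite{DujFelKit-Algo-08}, and complete binary trees have pathwidth $\Theta(\log n)$, no apex-tree containing such a tree can be drawn on $o(\log n)$ parallel lines. The goal is therefore a matching upper bound, and the natural engine is the recursive structure of pathwidth on trees: a tree $T$ of pathwidth $p$ contains a path $P$ (a ``spine'') whose deletion leaves subtrees each of pathwidth at most $p-1$.

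Using this, I would first draw $T$ alone. Embed the spine $P$ as an $x$-monotone zigzag occupying two adjacent horizontal lines, allocate to each hanging subtree a disjoint vertical slab above or below this strip (choosing the embedding so that the slabs do not interfere), and draw each subtree recursively inside its slab. Because the pathwidth drops by one per level and each level consumes only a constant number of new lines, the recursion has depth $p$ and the whole tree is drawn on $O(p)=O(\log n)$ parallel lines. The invariant carried through the recursion is that each subtree occupies an axis-aligned box with its attachment vertex on the box boundary, so the slabs compose without crossings.

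The substantive new difficulty is the apex. I would choose the planar embedding of $G=T+a$ so that $a$ lies on the outer face; its neighbors $s_1,\dots,s_k$ then occur in a fixed order along the outer boundary, the tree paths joining consecutive $s_i$ bound the faces incident to $a$, and pendant subtrees hang into those faces. The plan is to strengthen the recursive invariant so that, in every subproblem, the current apex-neighbors all appear on a common side of the box, in the order dictated by the embedding, so that a single copy of $a$ placed just outside that side can be joined to all of them by straight segments crossing neither each other nor the tree. To keep the neighbors simultaneously reachable while still halving the instance, I would balance each split not by tree size alone but jointly by tree size and by the number of apex-neighbors on each side, placing the two halves above and below a central line and seating $a$ between them so that it can fan into both halves.

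I expect the main obstacle to be exactly this apex-reachability requirement, because it is in tension with the line budget: the cheapest way to keep every apex-neighbor visible to one vertex is to stack the recursive pieces so they share a boundary, but stacking costs lines, whereas placing pieces side by side to save lines hides their neighbors behind one another. Resolving this tension is what forces the logarithmic recursion depth, and engineering a clean invariant that survives the merge --- guaranteeing both planarity of the spine-and-slab composition and the non-crossing fan from $a$ --- is the heart of the argument. Finally, I would note that the exponential-size apex-trees of \autoref{thm:treelike} show the $O(\log n)$ bound is best possible, so this matching upper bound is precisely what explains why that lower-bound construction must be exponentially large.
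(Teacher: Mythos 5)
Your proposal correctly identifies the crux of the problem---getting every apex-neighbor simultaneously visible from a single point without blowing the line budget---but it stops exactly there: the invariant that is supposed to resolve this tension is never actually constructed, and the sketch of it does not work as stated. The apex is one point. Your plan of ``seating $a$ between the two halves so that it can fan into both'' is only available at the top level of the recursion; at every deeper level, the recursive pieces sit inside slabs that are displaced sideways and are sandwiched between the spines of shallower levels, so a line of sight from a vertex nested $k$ levels deep to the fixed apex location must cross (or dodge) all the spine paths laid down at levels $1,\dots,k-1$. Nothing in your invariant (``apex-neighbors on a common side of the box, in embedding order'') prevents those spines from occluding them, and note that the apex may be adjacent to a constant fraction of all tree vertices (as in the lower-bound construction of \autoref{thm:treelike}), so you cannot afford to route even a small number of these edges around obstructions at a cost of extra lines. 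Balancing the splits by apex-neighbor count does not remove this obstruction; it only balances it.

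The paper closes this gap with a different and much more rigid layout, and it is worth seeing why it succeeds where the slab recursion stalls. It uses the heavy path decomposition~\cite{SleTar-JCSS-83}: each vertex gets $x$-coordinate equal to its preorder rank (children ordered with the heavy edge last) and $y$-coordinate equal to the number of \emph{light} edges on its path to the root, giving a grid of height $\log_2 n$. The payoff of this specific coordinatization is a global slope invariant: every tree edge has slope in $[0,1]$, and the preorder ensures that for any vertex $v$, everything drawn above $v$ is either to its left, a descendant of $v$, or hangs below-right of $v$. Consequently the open cone of directions with slope greater than $1$ above every vertex is unobstructed, and a single apex placed $n+1$ units above the top-right corner of the grid sees \emph{all} $n$ tree vertices at once---so it can be joined to an arbitrary subset of them, with no case analysis on the embedding or on which vertices are apex-neighbors. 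In other words, instead of trying to thread visibility through a nested recursion, the paper makes visibility a one-shot consequence of a slope separation (tree edges in $[0,1]$, apex edges in $(1,\infty)$). Your pathwidth-based recursion likely can draw the tree alone on $O(\log n)$ lines, and your lower-bound remark via \cite{DujFelKit-Algo-08} is correct, but as submitted the argument for the apex---which you yourself call the heart of the proof---is missing.
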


\section{Counterexamples of cubic size}

\begin{figure}[t]
\centering\includegraphics[width=0.9\textwidth]{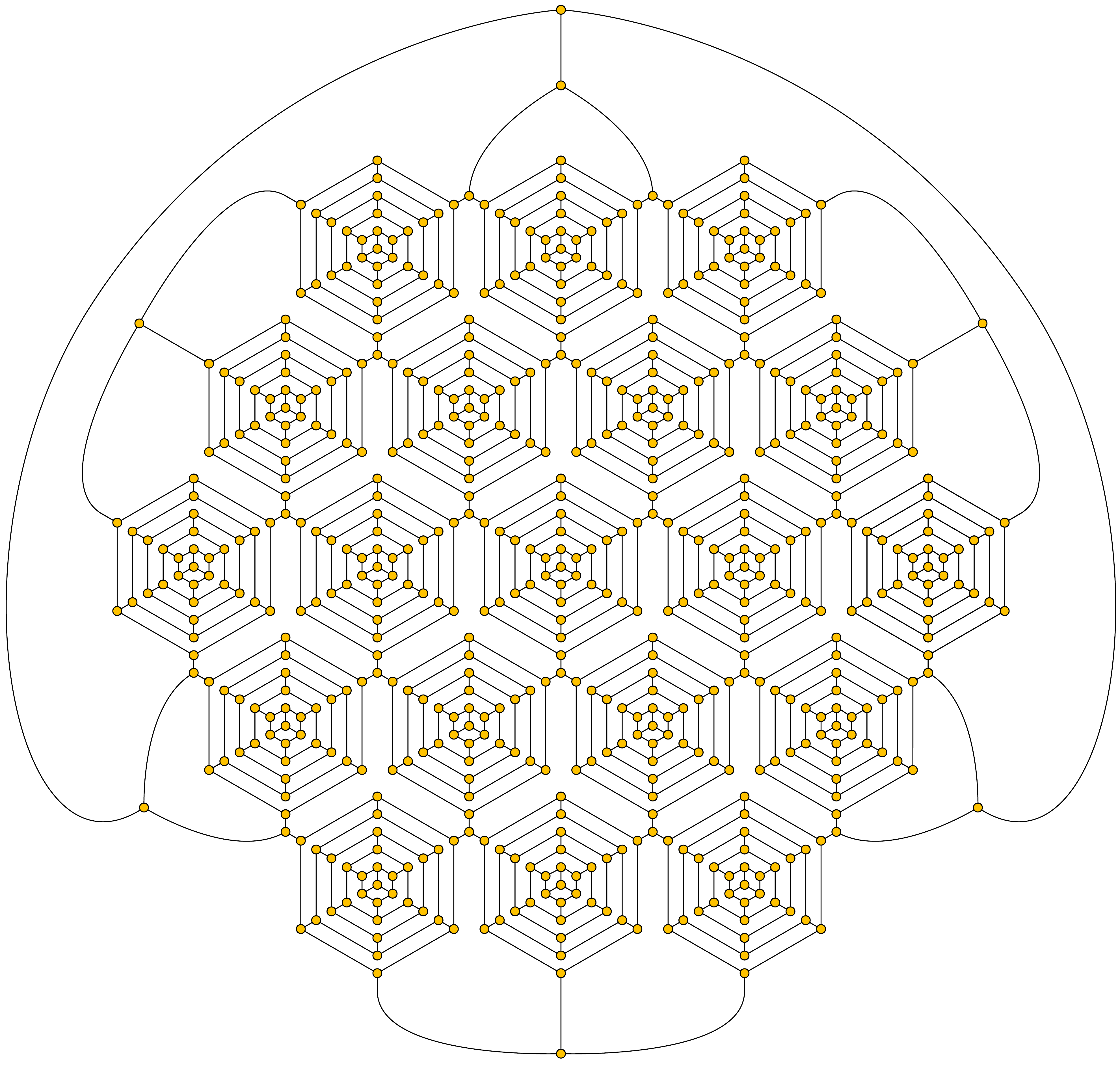}
\caption{The overall construction plan of our graph: $O(\ell^2)$ copies of a subgraph formed from $O(\ell)$ nested hexagons.}
\label{fig:spiderwebs}
\end{figure}

\subsection{Overview}
The overall strategy of our construction for graphs of size $O(\ell^3)$ that cannot be drawn on $\ell$ lines is illustrated in \autoref{fig:spiderwebs}. As can be seen in the figure, the graph consists of a number of subunits, each formed by a set of nested hexagons, connected to each other by triples of edges. These subunits resemble the \emph{nested triangles graph} frequently used as a counterexample in graph drawing~\cite{DolLeiTri-ACR-84,FraPat-GD-07}, but are based on hexagons rather than triangles.

The figure shows faces of three types: triples of quadrilaterals at the center of each set of nested hexagons, non-convex L-shaped hexagonal faces between pairs of hexagons in each set of nested hexagons, and Y-shaped dodecagonal faces between triples of subunits. Because the graph is planar and each bounded face has an even number of sides, the graph is also bipartite. It is straightforward to add a small number of additional vertices surrounding the sets of nested hexagons (as shown) to complete the graph to one that is cubic, 3-vertex-connected, and still bipartite. Because it is 3-vertex-connected, it has a unique planar embedding, the one shown, up to the choice of which face of the embedding is the outer face. With at most one exception (the subunit that includes the outer face), all subunits must be drawn as shown in the figure (topologically but not geometrically), within disjoint hexagonal regions of the plane.

The drawing also shows that each subunit can be drawn using only three lines. Indeed, the number of lines needed to cover all of the vertices in \autoref{fig:spiderwebs} (and in similar figures with more subunits) is proportional only to the square root of the number of subunits. Nevertheless, we shall show that, if there are enough subunits relative to the number of lines, then at least one of the subunits will be difficult to draw on few lines. More specifically, for a given parameter $\ell$ (a number of lines that should be too small to cover all vertices of the drawing) we will choose the number of subunits to be at least $\tbinom{\ell}{2}+2$, two more than the largest possible number of crossing points that $\ell$ lines can have. In this way, there will be at least one subunit that does not include the outer face of the embedding, and does not surround any crossing points of the lines.

We will show that, for subunits formed from a sufficiently large number of nested hexagons (linear in $\ell$), it is not possible to draw the subunit on $\ell$ lines without surrounding any crossing points.
Because, nevertheless, one of the subunits must fail to surround any crossing points, it cannot be drawn on $\ell$ lines. It follows that the whole graph also cannot be drawn on $\ell$ lines.

\subsection{Nested polygons with no surrounded crossings}

To formalize the subunits of the drawing of \autoref{fig:spiderwebs}, we define a \emph{$(p,r)$-nest},
for positive integers $p$ and $r$, to be a collection of $r$ disjoint simple $p$-gons (not necessarily convex), together with one additional point (the \emph{egg}), such that each $p$-gon contains the egg.
Because the $p$-gons are disjoint and all contain the egg, it is necessarily the case (by the Jordan curve theorem) that each two of the $p$-gons are nested, one inside the other. Then the subunits of \autoref{fig:spiderwebs} form a $(6,r)$-nest, for some $r$, together with some additional graph edges between consecutive cycles that force the cycles to be nested within each other but play no additional role in our analysis.

In \autoref{fig:spiderwebs}, each $(6,r)$-nest is drawn with all hexagon vertices on three lines that cross at the egg of the nest. More generally, whenever $p$ is even, a $(p,r)$-nest can be drawn on only two crossing lines, with the egg at the crossing point; when $p$ is odd, three lines suffice.
However, in all of these drawings, a crossing point of the lines is contained within at least one polygon of the nest. We will show that, when this does not happen, nests require $\Omega(r/p)$ lines to cover all of their points.

\begin{lemma}
Let $p$ be a positive integer, let $P$ be a simple $p$-gon, and let $L$ be a line. Then $L$ intersects the interior of $P$ in at most $\lfloor p/2\rfloor$ open line segments.
\end{lemma}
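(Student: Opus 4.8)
The plan is to reduce the statement to an edge-counting inequality. Take $L$ to be horizontal without loss of generality. The set $L\cap\mathrm{int}(P)$ is open and, since $P$ is bounded, bounded; hence it is a disjoint union of some number $k$ of bounded open segments, and the claim is exactly $k\le\lfloor p/2\rfloor$, i.e.\ $2k\le p$. To get this I would charge the two endpoints of each of these $k$ open segments to edges of $P$. Each endpoint lies on $\partial P$: it cannot be interior, by maximality of the segments, and cannot be at infinity, by boundedness. So there are $2k$ endpoint incidences, all on the boundary, and it suffices to assign to each one an incident edge of $P$ in such a way that no edge is used twice, which immediately yields $2k\le p$.

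The assignment is transparent when $L$ is in \emph{general position}, meaning it passes through no vertex of $P$ and contains no edge. Then each endpoint lies in the relative interior of some edge, that edge meets $L$ in exactly that one point (a segment not lying on $L$ crosses a line at most once), and so each edge is charged at most once. Equivalently, I could sweep a point along $L$ from one end to the other: it starts and finishes outside the bounded polygon, and its membership in $\mathrm{int}(P)$ toggles at each transversal crossing of $\partial P$, so the crossings alternate, number $2k$, and lie on $2k$ distinct edges, forcing $2k\le p$.

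The main obstacle is the degenerate cases, where an endpoint is a vertex of $P$ or lies on an edge contained in $L$, because the clean ``one crossing per edge'' bookkeeping breaks down. A vertex $v$ on $L$ may be a genuine crossing, or a tangency with its two incident edges on the same side of $L$; and in the tangency case the picture splits further according to whether the interior points toward or away from those edges, the \emph{reflex} subcase being the dangerous one in which two distinct interior segments abut at $v$, so that $v$ must be charged twice. My plan is to turn these degeneracies into the very slack that saves the count: a vertex on $L$ offers at least two incident edges to charge, and the edges meeting a tangency, together with the edges of any maximal run of collinear edges lying in $L$, are never used as transversal crossings, so they can absorb the extra charges while the assignment stays injective. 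Making this local charging rule precise, treating each maximal collinear run as a single feature, and checking global consistency by a short case analysis of the local picture of $\partial P$ at each point of $L\cap\partial P$ is the crux of the argument. I would deliberately avoid the tempting shortcut of perturbing $L$ into general position: at a reflex tangency, perturbing the line to one side merges two interior segments and lowers the count, and when reflex vertices of both orientations are interleaved along $L$ no single nearby line preserves the number of segments, so perturbation does not give a clean upper bound.
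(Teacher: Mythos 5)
Your proposal is correct and takes essentially the same route as the paper: both arguments charge the two endpoints of each interior segment injectively to sides of $P$, using the fact that a side crossed in its relative interior meets $L$ only at that one point, while a vertex on $L$ (even a reflex tangency where two segments abut) supplies its two incident sides to absorb the double charge. The degenerate cases you single out as the crux are exactly what the paper's brief case analysis covers, and your explicit treatment of edges collinear with $L$ is, if anything, slightly more careful than the paper's own proof.
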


\begin{proof}
Each line segment begins and ends at points where $L$ intersects $P$ either at a vertex or at an interior point of one of the sides of $P$. If the segment endpoint is at a crossing of $L$ with an interior point of a side of $P$, then that point is the endpoint of only one segment of $L$, and is the only point of intersection of $L$ with that side. If the segment endpoint is a vertex of $P$, then it may be the endpoint of of two segments of $L$, but in that case it is the only point of intersection of $L$ with both sides incident to the vertex. So in either case each endpoint of a segment of $L$ uses up at least one side of $P$. As $P$ has $p$ sides, the number of segments is at most $\lfloor p/2\rfloor$.
\end{proof}

\begin{corollary}
\label{cor:few-segs}
Let $\mathcal{A}$ be an arrangement of $\ell$ lines, and let $P$ be a simple $p$-gon whose interior is disjoint from the crossings of $\mathcal{A}$. Then the lines of $\mathcal{A}$ intersect the interior of $P$ in at most $\ell\cdot\lfloor p/2\rfloor$ open line segments.
\end{corollary}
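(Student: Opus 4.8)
The plan is to apply the preceding lemma to each of the $\ell$ lines of $\mathcal{A}$ separately and then to sum the resulting bounds; the crossing-freeness hypothesis is exactly what guarantees that this sum is a valid count rather than an overcount or undercount.

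First I would note that the lemma gives, for each individual line $L \in \mathcal{A}$, at most $\lfloor p/2 \rfloor$ maximal open segments of $L$ lying in the interior of $P$. What must then be checked is that these per-line segments coincide with the open segments into which the whole arrangement $\mathcal{A}$ cuts the interior of $P$, so that summing over the lines counts each segment exactly once. This is where the hypothesis enters: because the interior of $P$ contains no crossing of $\mathcal{A}$, no maximal segment of one line is further subdivided by meeting another line inside $P$, and no two segments coming from different lines can share a point inside $P$ (such a shared point would be a crossing of the arrangement lying in the forbidden region).

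With that established, I would simply add the per-line bounds. Each of the $\ell$ lines contributes at most $\lfloor p/2 \rfloor$ open segments inside $P$, and by the previous step these segments are pairwise distinct across lines, so the lines of $\mathcal{A}$ meet the interior of $P$ in at most $\ell \cdot \lfloor p/2 \rfloor$ open segments, as claimed.

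The argument is essentially immediate from the lemma, and I expect the only point requiring genuine care to be the crossing-free step: it does double duty, both preventing a single line's segment from being split into several arrangement pieces by interior crossings and preventing segments belonging to distinct lines from being conflated, so that the naive sum over the $\ell$ lines is exactly the correct total. Everything else is routine bookkeeping.
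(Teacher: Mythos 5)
Your proof is correct and is exactly the argument the paper intends: the corollary is stated without proof precisely because it follows by applying the preceding lemma to each of the $\ell$ lines and summing, which is what you do. Your additional observation that the crossing-free hypothesis guarantees the per-line maximal segments are neither subdivided nor conflated inside $P$ is a worthwhile point of care, but it does not change the approach.
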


\begin{lemma}
\label{lem:region-tree}
Let $\mathcal{A}$ be an arrangement of lines and $P$ be a simple polygon that does not contain any crossing point of $\mathcal{A}$.
Then the lines of $\mathcal{A}$ partition the interior of $P$ into regions in such a way that the graph of regions and their adjacencies forms a tree.
\end{lemma}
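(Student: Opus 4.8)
The plan is to exploit the fact that, since $P$ contains no crossing point of $\mathcal{A}$, the pieces in which the lines of $\mathcal{A}$ meet the interior of $P$ behave like a family of pairwise non-crossing chords of a topological disk. Concretely, I would call each maximal connected component of $L\cap P$ (over all lines $L\in\mathcal{A}$) a \emph{chord}: a straight segment whose relative interior lies in the interior of $P$ and whose two endpoints lie on $\partial P$, as in the preceding lemma, which also bounds their number. Two such chords can meet only at a crossing of $\mathcal{A}$, and no such crossing lies in the interior of $P$; hence the chords have pairwise disjoint relative interiors. The \emph{regions} of the statement are the connected components of the interior of $P$ with all chords deleted, and two regions are adjacent exactly when their closures share a subsegment of some chord.

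I would then prove the lemma by induction on the number $s$ of chords, maintaining at each step both that the adjacency graph is a tree and, as an auxiliary invariant, that every region is simply connected. For $s=0$ there is a single region, the interior of $P$ itself, and the adjacency graph is a single vertex, a tree. For the inductive step, add one chord $c$ to the current family. Because $c$ crosses none of the previously added chords and avoids $\partial P$ along its relative interior, that relative interior is a connected set disjoint from every existing chord, and therefore lies entirely within a single current region $R$; moreover the two endpoints of $c$ lie on the part of $\partial P$ bounding $R$. Since $R$ is simply connected, the chord $c$ splits $R$ into exactly two simply connected pieces $R_1,R_2$ (an instance of the Jordan curve theorem for a disk), so the invariant is preserved. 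In the adjacency graph this replaces the single vertex $R$ by the two vertices $R_1,R_2$ joined by the new edge corresponding to $c$; each chord that previously bounded $R$ does not cross $c$, hence lies on one side of $c$ and is inherited by exactly one of $R_1,R_2$. This expansion of a vertex into an edge adds one vertex and one edge while keeping the graph connected and acyclic, so it remains a tree.

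Two points will require the most care, and both are topological rather than combinatorial. The first is the claim that a chord of a simply connected planar region separates it into exactly two simply connected pieces which inherit the old adjacencies cleanly; this is where one must invoke the Jordan curve theorem (or, equivalently, an Euler-formula count) and use that the chords are not subdivided by one another, so that each chord borders exactly two regions and contributes exactly one edge. The second is checking that the argument survives when two chords share an endpoint on $\partial P$, which can occur if a crossing of $\mathcal{A}$ happens to lie on the boundary of $P$: such a shared point lies on $\partial P$ and in no region interior, so it alters neither the region interiors nor the adjacencies, and the induction proceeds verbatim. As a cleaner alternative to the induction, one can argue directly that every edge of the adjacency graph is a bridge: each chord $c$ separates the interior of $P$ into two connected parts, every other chord lies wholly in one of these parts, and hence deleting the edge of $c$ disconnects the regions on one side of $c$ from those on the other. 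Since the interior of $P$ is connected, the adjacency graph is connected, and a connected graph in which every edge is a bridge is a tree.
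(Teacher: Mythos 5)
Your proof is correct, but it takes a genuinely different route from the paper. The paper argues globally: connectivity comes from tracing a curve between two regions and reading off the walk of regions it visits, and acyclicity comes from a contradiction --- a simple cycle in the adjacency graph would yield a simple closed curve inside $P$ crossed by at least two lines in alternating order, and alternating crossings force that curve (hence $P$) to enclose a crossing point of $\mathcal{A}$. You instead work chord-by-chord: each maximal segment of $(\bigcup\mathcal{A})\cap \mathrm{int}(P)$ is a crosscut of a simply connected region, splitting it into two simply connected pieces, so adding chords one at a time performs a vertex split that preserves tree-ness; your alternative formulation --- every chord separates the disk, every other chord lies on one side of it, hence every adjacency edge is a bridge, and a connected graph all of whose edges are bridges is a tree --- is arguably the cleanest version of the argument. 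Both proofs consume the hypothesis ``no crossing inside $P$'' through the Jordan curve theorem, just at different moments: the paper uses it to forbid alternating crossings of a closed curve, while you use it to guarantee that the chords are pairwise non-crossing crosscuts. Your approach has the side benefit of making explicit that each chord borders exactly two regions and that regions stay simply connected, structural facts that the paper leaves implicit but quietly relies on in \autoref{lem:use-two} (where a leaf region is cut off from the rest of $P$ by a single segment); the paper's proof, in exchange, avoids the bookkeeping of an incremental construction and the need to handle degenerate shared endpoints on $\partial P$, which you correctly flag and dispose of.
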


\begin{proof}
To show that the graph of regions and adjacencies is connected, consider any two regions $R_i$ and $R_j$, and choose a curve $C$ within the interior of $P$ connecting any point in $R_i$ to any point in $R_j$. Then the sequence of regions crossed by $C$ forms a walk in the region adjacency graph connecting $R_i$ to $R_j$.

To show that the graph of regions and adjacencies has no simple cycle, assume for a contradiction that there is such a cycle. Then by choosing a representative point within each region of the cycle, and connecting these points by curves that pass between adjacent regions without crossing any other regions, we can form a simple closed curve $C$ in the plane that crosses the lines of $\mathcal{A}$ in exactly the order given by the cycle. By the Jordan curve theorem, each line that crosses into the interior of $C$ must cross out of $C$ at another point. Two crossings of $C$ by the same line cannot be adjacent in the cyclic order of crossings, for then the graph cycle corresponding to $C$ would not be simple. Therefore, $C$ is crossed by at least two lines, in alternating order. But this can happen only when $C$ contains the crossing point of these lines, an impossibility as $C$ is entirely contained in $P$ which we assumed to enclose no crossings. This contradiction shows that a simple cycle does not exist.

As a connected graph with no simple cycles, the graph of regions and adjacencies must be a tree.
\end{proof}

\begin{lemma}
\label{lem:use-two}
Let $\mathcal{A}$ be an arrangement of lines and $P$ be a simple polygon that does not contain any crossing point of $\mathcal{A}$. Let $\mathcal{S}$ be the system of disjoint open line segments formed by intersecting the lines of $\mathcal{A}$ with the interior of $P$, and let $Q$ be another simple polygon, disjoint from $P$, such that each vertex of $Q$ lies on a segment of $\mathcal{S}$.
Then at least two segments of $\mathcal{S}$ are disjoint from the interior of~$Q$.
\end{lemma}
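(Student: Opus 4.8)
The plan is to combine the tree structure from \autoref{lem:region-tree} with the fact that the vertices of $Q$ are pinned to the segments of $\mathcal{S}$. First I would invoke \autoref{lem:region-tree} to partition the interior of $P$ into regions whose adjacency graph is a tree $T$, in which the edges correspond bijectively to the segments of $\mathcal{S}$: each segment, having no crossing of $\mathcal{A}$ in its relative interior, separates exactly two regions. Because $Q$ is a genuine simple polygon, its vertices are not all collinear, so they lie on segments belonging to at least two distinct lines; hence $\mathcal{S}$ contains at least two segments and $T$ has at least two edges. A tree with at least two edges has at least two leaves, and these two leaves are served by two \emph{distinct} edges, since an edge joining two degree-one vertices could only occur in a two-vertex tree. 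I would then show that the segment associated with each leaf is disjoint from the interior of $Q$, which yields the two required segments.

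The heart of the argument is the claim that $\partial Q$ never penetrates the interior of a leaf region $R$. The key point is that a leaf region is bounded by a single segment $s$ together with an arc of $\partial P$, and, crucially, contains no segment of $\mathcal{S}$ in its interior. Since $Q$ lies strictly inside $P$ (its vertices lie on the open segments $\mathcal{S}$, which sit in the open interior of $P$), the curve $\partial Q$ cannot touch $\partial P$; so any maximal sub-arc of $\partial Q$ lying strictly inside $R$ would have to enter and leave through $s$ alone. Such a sub-arc cannot bend inside $R$, because every vertex of $Q$ lies on a segment of $\mathcal{S}$ and there is no segment inside $R$ to host it; but then the sub-arc is a single straight piece with both endpoints on the line through $s$, hence contained in that line and therefore not meeting the interior of $R$. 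This contradiction shows $\partial Q\cap\mathrm{int}(R)=\emptyset$.

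With this in hand the rest follows quickly. Because $\partial Q$ avoids $\mathrm{int}(R)$, the connected set $\mathrm{int}(R)$ lies entirely inside $Q$ or entirely outside it; the former is impossible, since $R$ touches $\partial P$ while $\bar Q$ does not, so $R$ lies entirely outside $Q$. Consequently the leaf segment $s$ cannot meet $\mathrm{int}(Q)$: a point of $s\cap\mathrm{int}(Q)$ would have a neighborhood in $\mathrm{int}(Q)$ meeting both sides of $s$, in particular meeting $R$, which we have just excluded. Applying this to the two distinct leaf segments of $T$ produces two segments of $\mathcal{S}$ disjoint from $\mathrm{int}(Q)$, as claimed.

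I expect the main obstacle to be making the ``no penetration of a leaf region'' step fully rigorous: it is tempting to assert directly but easy to get wrong, and the clean formulation is to analyze maximal sub-arcs of $\partial Q$ inside $R$ using the twin facts that $\partial Q$ misses $\partial P$ and that vertices of $Q$ can only sit on segments. A secondary point requiring care is the bookkeeping at the end---ensuring the two leaves correspond to two \emph{different} segments---which is why I first use the non-degeneracy of $Q$ to rule out a single collinear family of segments and then observe that an edge between two leaves forces a two-vertex tree.
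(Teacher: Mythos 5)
Your proof is correct and follows essentially the same route as the paper's: both invoke \autoref{lem:region-tree} to obtain a region tree with at least two leaves, argue that $\partial Q$ cannot penetrate a leaf region because vertices of $Q$ have no segment of $\mathcal{S}$ to lie on there (so any incursion would be a straight piece collinear with the leaf segment), and conclude that the two leaf segments avoid the interior of $Q$. Your extra care with degenerate cases---ruling out $|\mathcal{S}|\le 1$ via non-collinearity of $Q$'s vertices and verifying that the two leaves correspond to two \emph{distinct} segments---tightens details the paper leaves implicit, but it is the same argument.
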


\begin{proof}
Because the graph of regions and adjacencies formed in $P$ by $\mathcal{A}$ is a tree (\autoref{lem:region-tree}), it has at least two leaves. Let $s$ be either of the two segments of $\mathcal{S}$ separating one of these leaf regions from the rest of $P$. Then no edge of $Q$ can enter the interior of this leaf region, because there is no other segment available to be the endpoint of this edge. Therefore, $Q$ remains entirely on one side of $s$, and $s$ is disjoint from the interior of $Q$. As there were at least two choices for $s$, there are at least two segments of $\mathcal{S}$ that are disjoint from the interior of~$Q$.
\end{proof}

Putting these observations together, we have:

\begin{lemma}
\label{lem:deep-nest}
Let $\mathcal{A}$ be an arrangement of $\ell$ lines, let $p$ and $r$ be positive integers, and
suppose that $2(r-1)>\ell\cdot\lfloor p/2\rfloor$.
Then it is not possible to draw a $(p,r)$-nest in such a way that the polygon vertices of the nest and its egg all lie on lines of $\mathcal{A}$.
\end{lemma}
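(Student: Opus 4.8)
The plan is to argue by contradiction: suppose some $(p,r)$-nest with egg $e$ is drawn so that every polygon vertex and $e$ lies on a line of $\mathcal{A}$. All three earlier results (\autoref{cor:few-segs}, \autoref{lem:region-tree}, and \autoref{lem:use-two}) require the polygon in question to enclose no crossing of $\mathcal{A}$, so I first reduce to that case. Since the $p$-gons all contain $e$ they are pairwise nested by the Jordan curve theorem, say $P_1\supset P_2\supset\cdots\supset P_r\ni e$, and it therefore suffices to treat the situation in which the outermost polygon $P_1$ -- and hence every $P_i$ -- contains no crossing. This is the only case with content: if a crossing were surrounded, the nest could in fact be realized, for instance by the drawings on two or three lines through $e$ described above, so the hypothesis that no crossing is surrounded is implicit in the statement.

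Next I set up the global budget. Let $\mathcal{S}_i$ be the system of open segments cut out of the interior of $P_i$ by the lines of $\mathcal{A}$. By \autoref{cor:few-segs} each $\lvert\mathcal{S}_i\rvert\le \ell\lfloor p/2\rfloor$, and, more importantly, each individual line contributes at most $\lfloor p/2\rfloor$ segments to any single $P_i$. Because every vertex of $P_{i+1}$ lies on a line and inside $P_i$, it lies on a segment of $\mathcal{S}_i$; thus $P_{i+1}$ is a simple polygon disjoint from $P_i$ whose vertices lie on $\mathcal{S}_i$, which is exactly the hypothesis of \autoref{lem:use-two}. Applying that lemma to each consecutive pair $(P_i,P_{i+1})$ produces, at each of the $r-1$ levels, at least two segments of $\mathcal{S}_i$ disjoint from the interior of $P_{i+1}$; by \autoref{lem:region-tree} these separate two leaves of the region tree of $P_i$, so each bounds a ``leaf ear'' lying entirely in the open annulus $\mathrm{int}(P_i)\setminus\overline{\mathrm{int}(P_{i+1})}$.

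The heart of the argument is then to show that these $2(r-1)$ consumed segments are genuinely distinct and that, globally, they cannot outnumber $\ell\lfloor p/2\rfloor$. The disjointness of the successive annuli already makes the leaf ears at different levels pairwise disjoint, and the two chosen at a single level are distinct leaves, so there are $2(r-1)$ of them. Each consumed segment lies on one of the $\ell$ lines and inside $\mathrm{int}(P_1)$, so the remaining task is to bound, per line, how many consumed segments it can carry. I would do this by charging each consumed segment on a line $L$ to the chord of $L\cap \mathrm{int}(P_1)$ containing it, and then checking that distinct consumed segments are forced onto distinct chords -- equivalently, that each of the at most $\lfloor p/2\rfloor$ chords of $L$ inside $P_1$ can shield only one leaf ear. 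This gives at most $\lfloor p/2\rfloor$ consumed segments per line, and summing over the $\ell$ lines yields $2(r-1)\le \ell\lfloor p/2\rfloor$, contradicting the hypothesis $2(r-1)>\ell\lfloor p/2\rfloor$.

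I expect the main obstacle to be exactly this last bookkeeping step, and the reason is non-convexity: the $p$-gons need not be convex, so a single line may weave in and out of an inner polygon and split one segment of $\mathcal{S}_i$ into several segments of $\mathcal{S}_{i+1}$. Consequently the naive claim that the segment count drops by two at each level -- which would give the bound immediately -- is false, and one must instead track the consumed segments through the disjoint annuli and invoke the per-line cap of $\lfloor p/2\rfloor$ to prevent double counting against the global budget.
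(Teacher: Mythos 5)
Your overall strategy is the same as the paper's: budget the segments cut from the interior of the outermost polygon via \autoref{cor:few-segs}, have each of the $r-1$ inner polygons consume at least two segments via \autoref{lem:use-two}, and observe that the egg still needs a segment of its own; your reading of the implicit hypothesis (no arrangement crossing inside the nest, which is how the paper states and uses the lemma elsewhere) is also the intended one. You go beyond the paper in one respect: you notice that applying \autoref{lem:use-two} to consecutive pairs $(P_i,P_{i+1})$ produces segments of the \emph{varying} systems $\mathcal{S}_i$, not of the fixed budgeted system $\mathcal{S}=\mathcal{S}_1$, and that a non-convex inner polygon lets a line weave so that one chord of $\mathcal{S}$ fragments into several segments of $\mathcal{S}_{i+1}$. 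The paper's own one-paragraph proof passes over this double-counting issue in silence, so flagging it is to your credit.

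The gap is that your proposed repair is exactly the step that fails, and you leave it unproven. The claim you say must be ``checked'' --- that distinct consumed leaf ears are forced onto distinct chords of $L\cap\mathrm{int}(P_1)$ --- is false in general, for the very weaving reason you raise. For instance, let $P_i$ be a chevron (thick V) hexagon crossed by a line $L$ in two components $\sigma,\tau$ of a single chord $\hat s$ of $\mathrm{int}(P_1)$, with all deeper polygons and the egg in the arm met by $\tau$: then $\sigma$ is a leaf ear consumed at level $i$, yet $\hat s$ re-enters $\mathrm{int}(P_{i+1})$ through $\tau$ and can carry consumed ears at arbitrarily deeper levels, so the charging is not injective and the sum over lines does not bound $2(r-1)$. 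Nor can you always pick better ears inside the local tree of $P_i$: it may have only two leaves, both of whose chords weave back inside. A fix that works does the accounting on whole chords of $\mathcal{S}$ rather than on leaf segments of the intermediate systems: let $A_j\subseteq\mathcal{S}$ be the chords meeting $\mathrm{int}(P_j)$; these are precisely the edges of the subtree of the region tree of $P_1$ (\autoref{lem:region-tree}) induced by the cells meeting $\mathrm{int}(P_j)$, and rerunning the leaf argument of \autoref{lem:use-two} relative to this subtree (the doorway to one of its leaf cells $R$ is a single chord; a straight edge of $P_{j+1}$ entering $R$ cannot cross that chord's line twice, cannot end at a vertex since open cells contain no points of the lines, and $R$ cannot lie inside $P_{j+1}$ because its closure touches $\partial P_1$) shows that $\mathrm{int}(P_{j+1})$ vacates every leaf of the subtree, whence $|A_{j+1}|\le|A_j|-2$. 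Distinctness across levels is then automatic --- a chord consumed at level $j$ misses $\mathrm{int}(P_{j+1})$, while every chord consumed later meets it --- and with $|A_r|\ge 1$ for the egg this gives $2(r-1)+1\le|\mathcal{S}|\le\ell\lfloor p/2\rfloor$, the desired contradiction. Without an argument of this kind, your proof does not close.
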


\begin{proof}
Suppose for a contradiction that we have drawn a $(p,r)$-nest with all points on lines of $\mathcal{A}$. Let $\mathcal{S}$ be the system of disjoint open line segments formed by intersecting the lines of $\mathcal{A}$ with the outer polygon of the nest. Then $|\mathcal{S}|\le \ell\cdot\lfloor p/2\rfloor$ by \autoref{cor:few-segs}, and each of the $r-1$ remaining polygons of the nest use up at least two of the segments of $\mathcal{S}$ by \autoref{lem:use-two}. Therefore, if $2(r-1)>\ell\cdot\lfloor p/2\rfloor$ (as we supposed in the statement of the lemma), there will be no segments remaining for the egg to lie on. Therefore, a drawing meeting these conditions is impossible.
\end{proof}

\begin{figure}[t]
\centering\includegraphics[width=\textwidth]{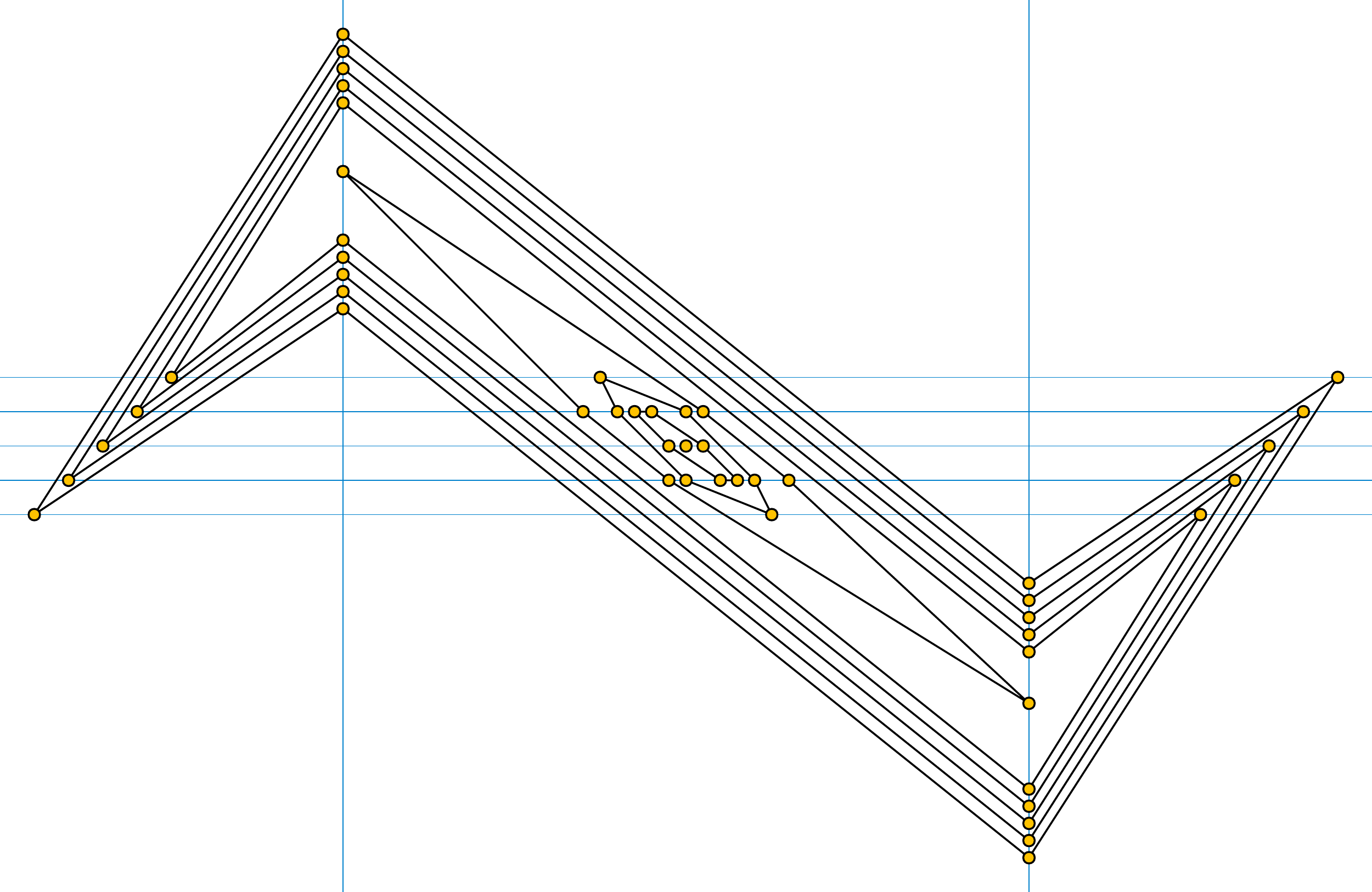}
\caption{An arrangement of $\ell$ lines can support $\lfloor 3(\ell-1)/2-1 \rfloor$ nested hexagons surrounding a central point, with the point and the hexagon vertices all on the lines, and all arrangement crossings exterior to all hexagons.}
\label{fig:hexnest}
\end{figure}

In the particular case of a $(6,r)$-nest (as used in \autoref{fig:spiderwebs}), \autoref{lem:deep-nest} states that a drawing that does not contain an arrangement crossing cannot exist for $r\ge 3\ell/2+2$.
This is close to tight: \autoref{fig:hexnest} shows how to draw a $(6,r)$-nest with all polygon vertices and the egg on $\ell$ lines, for $r=3\ell/2-O(1)$.

\subsection{The main result}

\begin{proof}[Proof of \autoref{thm:cubic}]
We wish to show the existence of a planar cubic bipartite graph that cannot be drawn on $\ell$ lines, for a given parameter $\ell$. Consider a cubic bipartite 3-vertex-connected planar graph formed, as in \autoref{fig:spiderwebs}, by at least $\tbinom{l}{2}+2$ subunits, each of which must be drawn as a $(6,r)$-nest, for $r=\lceil 3\ell/2\rceil+2$. Because there are $O(\ell^2)$ subunits, each of size $O(\ell)$, and $O(\ell)$ vertices surrounding the subunits, the total size of the resulting graph is $O(\ell^3)$.

To argue that the resulting graph cannot be drawn on $\ell$ lines, we consider an arbitrary arrangement $\mathcal{A}$ with $\ell$ lines, and prove that the graph cannot be drawn with all of its vertices on $\mathcal{A}$. Among the graph's $\tbinom{l}{2}+2$ subunits, one subunit (the one from which the outer face was chosen) can surround all the others, but the rest must be drawn in disjoint regions of the plane. Because there are more remaining subunits than the number of crossing points of $\mathcal{A}$ (which is at most $\tbinom{\ell}{2}$), at least one subunit must be drawn in such a way that it does not contain any of the crossing points of $\mathcal{A}$. However, by \autoref{lem:deep-nest}, this is impossible.
\end{proof}

\begin{figure}[t]
\centering\includegraphics[width=0.8\textwidth]{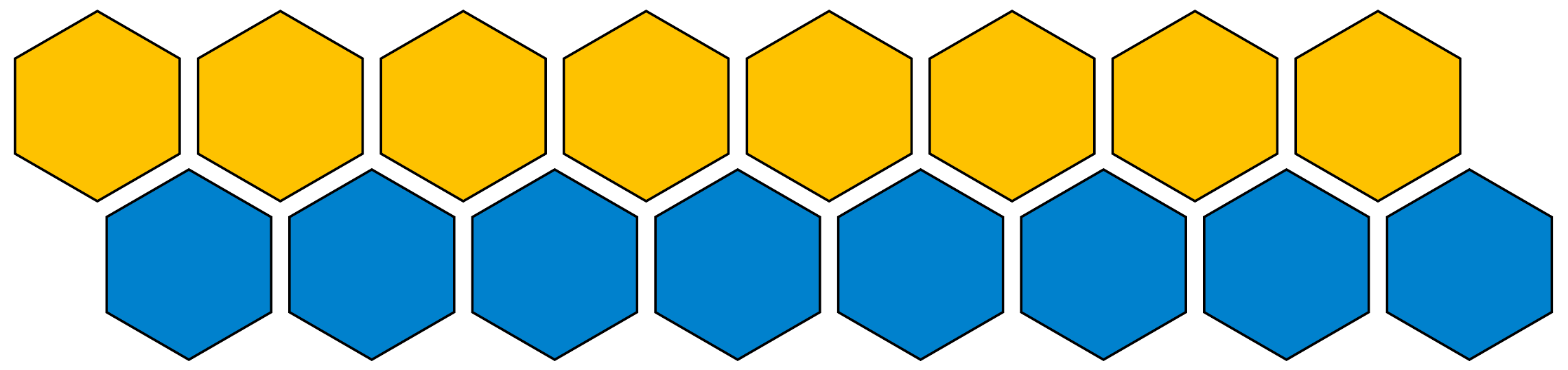}
\caption{Zigzag pattern of subunits (shown schematically as hexagons) used to construct a graph of  pathwidth $O(1)$ that cannot be drawn on few lines.}
\label{fig:zigzag}
\end{figure}
When the subunits of the graph are arranged as in \autoref{fig:spiderwebs}, into a compact hexagonal grid in the plane, then (as the figure shows) the vertices of the whole graph can be covered by $O(\ell)$ lines, even though we have proved that there is no cover by $\ell$ lines.
Alternatively, it is possible to arrange the subunits into a linear zig-zag pattern (\autoref{fig:zigzag}), preserving the 3-vertex-connectedness of the resulting graph and still requiring only $O(\ell^2)$ vertices to surround the subunits. When the subunits are arranged in this way, the resulting graph might require a larger number of lines to cover its vertices (i.e., we do not have tight bounds on the number of lines needed for a graph of this form), but it has pathwidth $O(1)$.

\section{Series-parallel graphs}

A \emph{two-terminal series-parallel graph} (series-parallel graph, for short) is a graph with two distinct designated \emph{terminal} vertices $s$ and $t$ formed recursively from smaller graphs of the same type (starting from a single edge) by two operations:
\begin{itemize}
\item Series composition: given two series-parallel graphs $G_1$ and $G_2$ with terminals $s_1$, $t_1$, $s_2$, and $t_2$, form their disjoint union, and then merge vertices $s_2$ and $t_1$ into a single vertex. Let the terminals of the resulting merged graph be the unmerged terminals of the given graphs, $s_1$ and $t_2$. Series composition forms an associative binary operation on these graphs (if we perform series compositions on a sequence of more than two graphs, the order in which we perform the compositions does not affect the result).
\item Parallel composition: given two series-parallel graphs $G_1$ and $G_2$ with terminals $s_1$, $t_1$, $s_2$, and $t_2$, form their disjoint union,  merge vertices $s_1$ and $s_2$ into a single vertex, and similarly merge $t_1$ and $t_2$ into a single vertex. Let the terminals of the resulting merged graph be the resulting merged vertices. Parallel composition forms an associative and commutative binary operation on these graphs (if we perform series compositions on a set of more than two graphs, neither the order of the two graphs in each composition nor the order in which we perform the compositions affects the result).
\end{itemize}
These graphs have treewidth two, and every graph of treewidth two is a subgraph of a series-parallel graph. They are automatically planar, and they include every outerplanar graph.

\begin{figure}[t]
\centering\includegraphics[width=\textwidth]{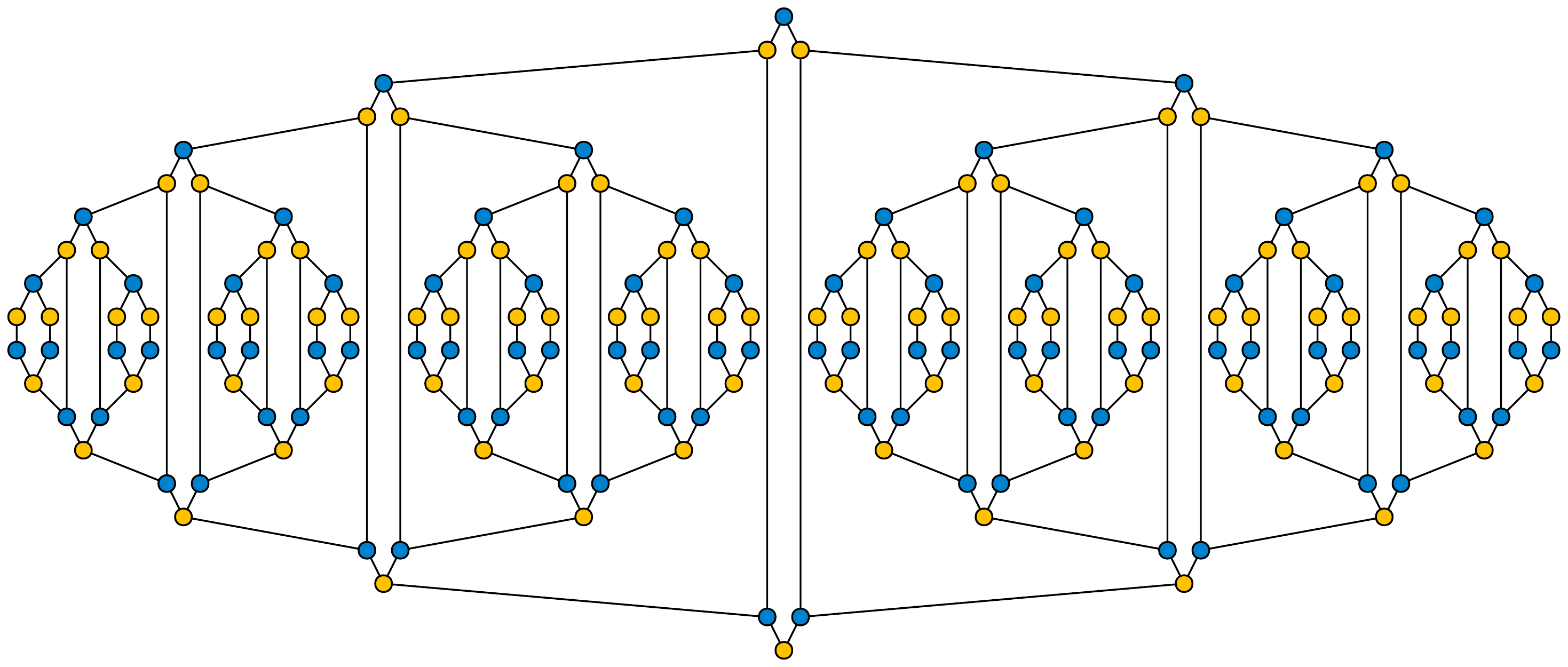}
\caption{A subcubic bipartite series-parallel graph that cannot be drawn on few lines.}
\label{fig:serpar}
\end{figure}

\begin{figure}[t]
\centering\includegraphics[scale=0.5]{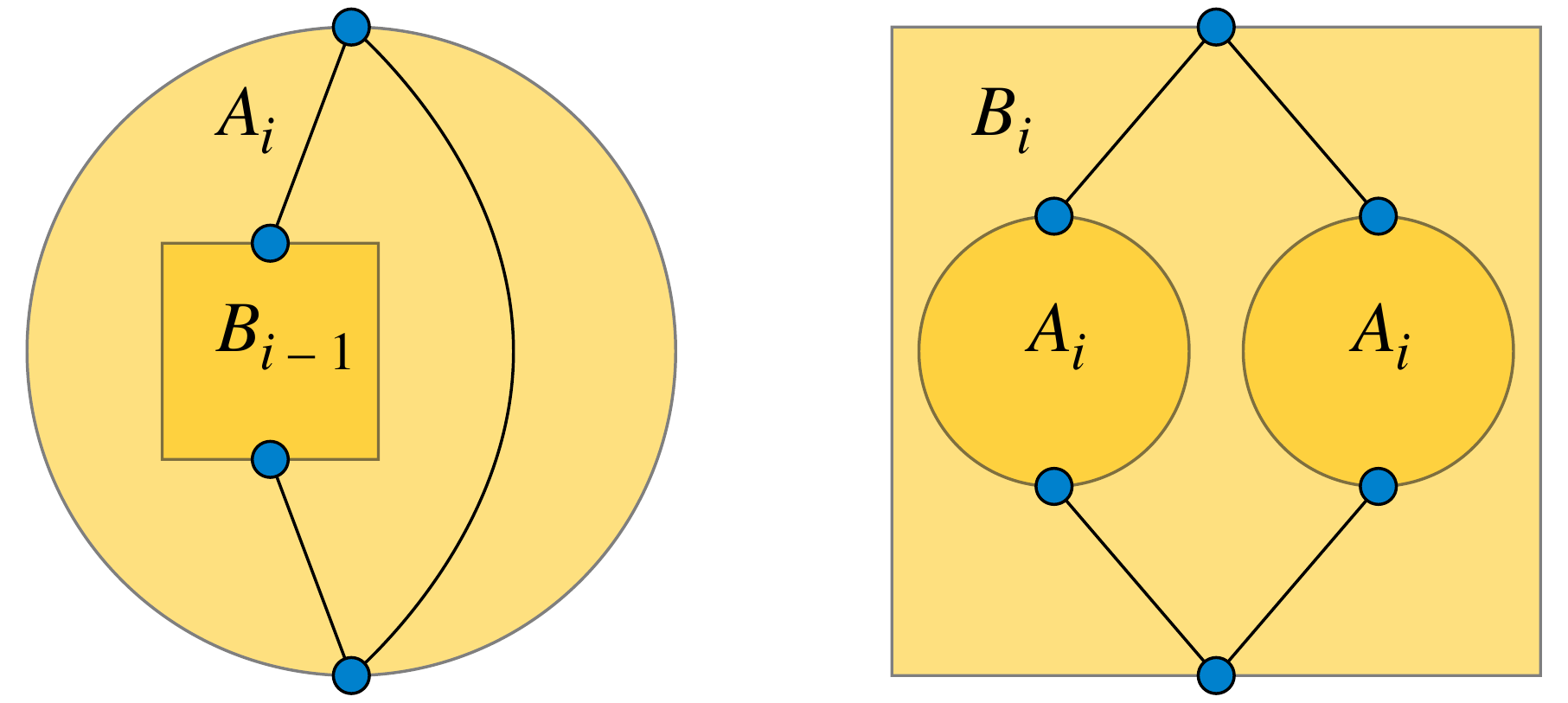}
\caption{Recursive construction of $A_i$ (left) and $B_i$ (right)}
\label{fig:recurse}
\end{figure}

We will recursively construct two families  of series-parallel graphs
$A_i$ and $B_i$ that cannot be drawn on a bounded number of lines. \autoref{fig:serpar} shows the graph $B_5$ from this family.
To construct these graphs, let $A_1$ be a series-parallel graph with one edge and two terminal vertices. Then:
\begin{itemize}
\item For each $i\ge 1$, let $B_i$ be the graph formed as the parallel composition of two subgraphs,
each of which is the series composition of an edge, $A_i$, and another edge (\autoref{fig:recurse}, right).
\item For each $i>1$, let $A_i$ be the graph formed as the parallel composition of two subgraphs,
one of which is a single edge and the other of which is the series composition of an edge, $B_{i-1}$, and another edge (\autoref{fig:recurse}, left).
\end{itemize}

It follows by induction that these graphs are subcubic, with degree two at their two terminals, and that they are bipartite, with a 2-coloring (shown in the figure) in which the two terminals have different colors. In the figure, the upper blue and lower yellow vertices are the terminals of a graph $B_i$ at some level of the construction, while the upper yellow and lower blue vertices are the terminals of a graph $A_i$ at some level of the construction.

Because they are 2-vertex-connected but not 3-vertex-connected, these graphs have many planar embeddings. 
The planar embeddings of any 2-connected graph may be understood in terms of its SPQR tree~\cite{Mac-DMJ-37},
which in the case of a series-parallel graph is more or less the same as the expression tree of series and parallel compositions from which it was formed (associating consecutive compositions of the same type into a single multi-operand operation). Because of this equivalence, the embeddings of the graphs $A_i$ and $B_i$ may be generated from the embedding shown in the figure by two types of change: any collection of subgraphs connecting two opposite terminals of the graph may be flipped, giving a mirror-image embedding of that subgraph within the larger graph, and any face of the resulting embedding may be chosen as the outer face.
We will show that these embeddings always contain large nested sets of hexagons; this property forms the basis for our argument about drawings on few lines.

\begin{figure}[t]
\centering\includegraphics[scale=0.5]{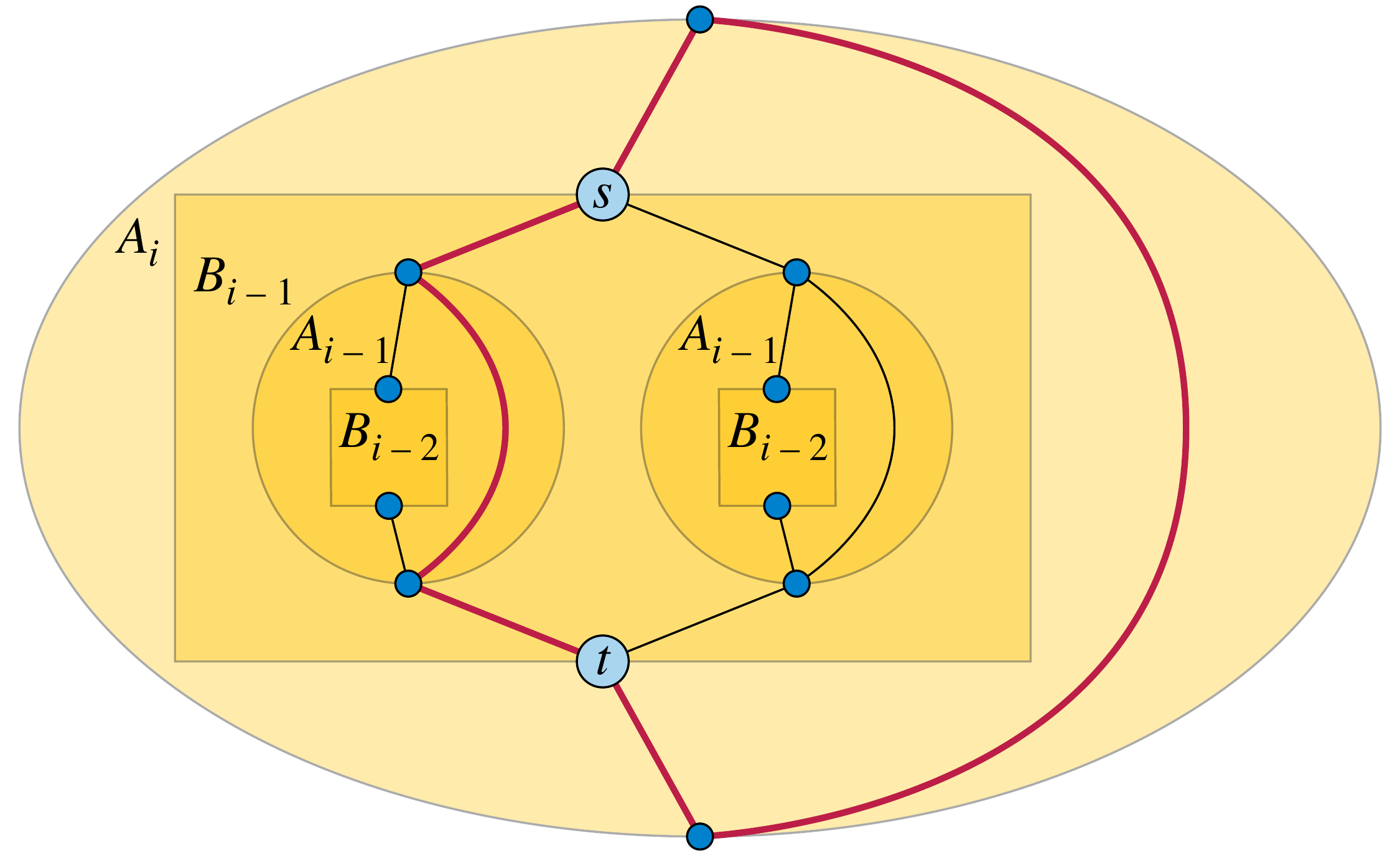}
\caption{Illustration for \autoref{lem:sp-nest}. No matter how $A_i$ is embedded, it will contain a 6-cycle (red) surrounding one of the copies of $A_{i-1}$ from which it is formed.}
\label{fig:surrounded}
\end{figure}

\begin{lemma}
\label{lem:sp-nest}
Every planar embedding of $A_i$ in which the two terminals belong to the outer face contains a $(6,i-1)$-nest.
\end{lemma}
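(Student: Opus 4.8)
The plan is to argue by induction on $i$. The base case $i=1$ is immediate, since a $(6,0)$-nest is just a single point (zero hexagons together with an egg), and any embedding of the one-edge graph $A_1$ supplies a vertex to serve as the egg. For the inductive step I would unpack the recursive construction (\autoref{fig:recurse}): writing $s,t$ for the terminals of $A_i$, the graph is the parallel composition of the edge $st$ with a path $s\,a\,[B_{i-1}]\,b\,t$, and $B_{i-1}$ is in turn the parallel composition of two paths $a\,c_1\,[A']\,d_1\,b$ and $a\,c_2\,[A'']\,d_2\,b$, where $A'$ and $A''$ are the two copies of $A_{i-1}$. The key structural observation is that each copy of $A_{i-1}$, being a parallel composition one of whose branches is a single edge, contains a direct edge between its terminals (the edges $c_1d_1$ and $c_2d_2$), and $A_i$ likewise contains the direct edge $st$. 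This yields three internally disjoint $a$--$b$ paths $\pi_0=a\,s\,t\,b$, $\pi_1=a\,c_1\,d_1\,b$, and $\pi_2=a\,c_2\,d_2\,b$, and hence two hexagons $H_1=\pi_0\cup\pi_1$ and $H_2=\pi_0\cup\pi_2$.

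The heart of the proof is to show that, in the given embedding, one of these hexagons encloses one whole copy of $A_{i-1}$ (\autoref{fig:surrounded}). Three internally disjoint $a$--$b$ paths form a subdivided theta graph with exactly three faces, each bounded by two of the paths; one of the three paths is therefore enclosed by the cycle formed from the other two (its two incident faces are both bounded). Since $s$ and $t$ are interior vertices of $\pi_0$ and lie on the outer face by hypothesis, $\pi_0$ must be one of the two paths bounding the outer face, so the enclosed path is $\pi_1$ or $\pi_2$; say it is $\pi_1$, so that the hexagon $H_2=\pi_0\cup\pi_2$ encloses $\pi_1$. I would then argue that $H_2$ encloses not merely $\pi_1$ but the entire copy $A'$: the vertices $c_1,d_1$ lie strictly inside $H_2$, the subgraph $A'$ is attached to the rest of the graph only at $c_1$ and $d_1$, and its remaining edges leave $c_1,d_1$ into faces lying inside $H_2$, so by planarity no part of $A'$ can cross $H_2$.

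Having enclosed a copy of $A_{i-1}$, I would apply the induction hypothesis to it. The inherited sub-embedding of $A'$ is a planar embedding of $A_{i-1}$, and since $A'$ meets the rest of the graph only through $c_1$ and $d_1$, both terminals lie on the single face of this sub-embedding that contains the rest of the graph, which we may take as its outer face; the hypothesis of the lemma is thus met, giving a $(6,i-2)$-nest inside $A'$. Adjoining the enclosing hexagon $H_2$ --- which is vertex-disjoint from $A'$ and surrounds all of it, hence surrounds the $i-2$ inner hexagons and their common egg --- produces $i-1$ pairwise-disjoint hexagons sharing a common egg, that is, a $(6,i-1)$-nest. The main obstacle is the enclosure step of the second paragraph: one must use the terminals-on-outer-face hypothesis precisely to pin down which path is enclosed, and then upgrade ``the direct-edge path is enclosed'' to ``the whole copy of $A_{i-1}$ is enclosed,'' which is where planarity and the two-terminal attachment of each copy do the work. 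The remaining bookkeeping (disjointness of the hexagons and that they share a common egg) is routine.
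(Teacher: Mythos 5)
Your proof is correct and takes essentially the same route as the paper's: the same three internally disjoint length-three paths between the terminals of $B_{i-1}$, the same use of the outer-face hypothesis to conclude that the hexagon formed from two of these paths surrounds the remaining copy of $A_{i-1}$, and the same induction that adjoins this hexagon to the $(6,i-2)$-nest found inside the surrounded copy. You merely make explicit several steps the paper leaves implicit --- the direct terminal edge inside each $A_{i-1}$, the theta-graph face count pinning down which path is enclosed, the upgrade from ``the path is enclosed'' to ``the whole copy is enclosed,'' and the check that the surrounded copy's terminals lie on the outer face of its inherited sub-embedding.
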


\begin{proof}
For $i=1$, a $(6,i-1)$-nest consists of a single point (the egg), and the result follows trivially.
Otherwise, let $s$ and $t$ be the two terminals of the $B_{i-1}$ subgraph from which the given $A_i$ graph is formed (\autoref{fig:surrounded}). Then, in the $A_i$ graph, there are three length-three paths from $s$ to $t$:
one through the two terminals of the $A_i$ graph,
and one through each of the two $A_{i-1}$ subgraphs from which the $B_{i-1}$ subgraph is formed.

By the assumption that the two terminals of $A_i$ belong to the outer face, 
there is a six-vertex cycle combining two of these three length-three paths, one through the two terminals of $A_i$ and one through one of the copies of $A_{i-1}$, that surrounds the other copy of $A_{i-1}$. By induction, this surrounded copy contains a $(6,i-2)$-nest, which together with the cycle that surrounds it forms a $(6,i-1)$-nest.
\end{proof}

\begin{lemma}
\label{lem:sp-many-nests}
Let $j\le i$ be two positive integers. Then every planar embedding of $B_i$ contains at least $2^j-1$ disjointly-embedded $(6,i-j)$-nests.
\end{lemma}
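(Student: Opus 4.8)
The plan is to locate inside $B_i$ a family of $2^j$ pairwise vertex-disjoint copies of $A_{i-j+1}$, each sitting in its own region of the embedding, and to apply \autoref{lem:sp-nest} to every copy that does not enclose the outer face. First I would unwind the recursive construction $j$ levels deep. Since every $B_m$ is a parallel composition containing two copies of $A_m$, while every $A_m$ (for $m>1$) contains a single copy of $B_{m-1}$, the number of copies doubles each time a $B$-level expands into its two $A$-children and is preserved when an $A$-level descends to its unique $B$-child. Following the chain $B_i\to A_i\to B_{i-1}\to A_{i-1}\to\cdots\to A_{i-j+1}$ therefore yields exactly $2^j$ copies of $A_{i-j+1}$; this requires $j\le i$, so that the index $i-j+1$ stays at least $1$. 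These copies are pairwise vertex-disjoint, because at each parallel split the two $A$-copies lie on opposite strands of the composition and each nested $B$ lies strictly between the terminals of its parent $A$; thus the copies correspond to disjoint subtrees of the series--parallel composition tree.

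Next I would invoke the series--parallel embedding structure (the SPQR description recalled above) to argue that, in any planar embedding of $B_i$, each copy $H=A_{i-j+1}$ is drawn inside a topological disk $D_H$ whose boundary passes through the two terminals of $H$, with the remainder of $B_i$ lying outside $D_H$ and meeting $H$ only at those two terminals. Because the copies correspond to disjoint subtrees of the composition tree, the disks $D_H$ are pairwise disjoint, and hence the outer face of the embedding lies inside at most one of them.

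For every copy $H$ whose disk does not contain the outer face, the face of the induced embedding of $H$ facing the exterior of $D_H$ is a single face incident to both terminals of $H$, and it is the face that contains the outer face of $B_i$; so both terminals of $H$ lie on the outer face. \autoref{lem:sp-nest}, applied with parameter $i-j+1$, then produces a $(6,(i-j+1)-1)=(6,i-j)$-nest drawn entirely within $D_H$. Since the disks are pairwise disjoint, these nests are disjointly embedded. At most one copy is spoiled, namely the one whose disk contains the outer face, for which the terminals need not lie on the outer face and \autoref{lem:sp-nest} need not apply; all remaining copies contribute, so we obtain at least $2^j-1$ disjointly-embedded $(6,i-j)$-nests.

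The main obstacle is the topological bookkeeping in the middle step: checking that each copy really occupies its own disk with both terminals on the boundary, and that precisely the copies avoiding the outer face satisfy the hypothesis of \autoref{lem:sp-nest}. This is where the $2$-vertex-connectivity and the series--parallel (SPQR) structure are essential. Once the disk picture is established, the rest is immediate: the count of copies is $2^j$, at most one is lost to the outer face, and the disjointness of the resulting nests follows directly from the disjointness of the disks $D_H$. I would deliberately avoid an induction on $j$ here, since combining the $2^{j-1}-1$ nests produced inside the two nested copies of $B_{i-1}$ with an extra top-level nest risks overlap between the top-level nest and the deeper ones; working at a single fixed depth keeps all $2^j$ candidate nests automatically disjoint.
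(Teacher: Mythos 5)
Your proposal is correct and follows essentially the same route as the paper's proof, which simply observes that $B_i$ is built from $2^j$ copies of $A_{i-j+1}$, that at most one of these copies can contain the outer face, and that \autoref{lem:sp-nest} applies to the remaining $2^j-1$ copies. Your additional detail on the disk structure and disjointness of the resulting nests fills in steps the paper leaves implicit, but the underlying argument is identical.
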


\begin{proof}
$B_j$ is recursively constructed from $2^j$ copies of $A_{i-j+1}$.
At most one of these copies can contain the outer face of the embedding, so at least $2^j-1$ copies
are embedded with their two terminals outermost. The result follows from \autoref{lem:sp-nest}.
\end{proof}

\begin{proof}[Proof of \autoref{thm:treelike}, series-parallel case]
The theorem claims that there exists a cubic bipartite series-parallel graph that cannot be drawn on $\ell$ lines. To prove this, choose $j$ such that $2^j\ge \tbinom{\ell}{2}+2$
and choose $i$ such that $i-j\ge 3\ell/2+2$. We claim that, for this case, $B_i$ has the required properties. By \autoref{lem:sp-many-nests}, it contains at least $\tbinom{\ell}{2}+1$ disjointly embedded $(6,i-j)$-nests, enough to ensure that at least one of them does not contain any crossing points of any given arrangement of $\ell$ lines.  By \autoref{lem:deep-nest}, a nest of this depth that does not contain any crossing points cannot be drawn with its vertices on $\ell$ lines.
\end{proof}

\section{Apex-tree graphs}

\subsection{Apex-tree graphs requiring many lines}
\autoref{fig:apex-tree} depicts a graph in the form of a tree (blue and yellow vertices) plus one additional vertex (red); such a graph has been called an \emph{apex-tree}, and the additional vertex is the \emph{apex}
The tree in the figure can be constructed from the series-parallel graph $B_4$ of the previous section by contracting half of the vertices (one vertex from each pair of terminals) into a single supervertex. Alternatively, it can be constructed from a complete binary tree by subdividing every non-leaf edge and then connecting each subdivision vertex and each leaf vertex to the apex.
These graphs are subcubic except at the apex, and bipartite.

\begin{figure}[t]
\centering\includegraphics[width=0.8\textwidth]{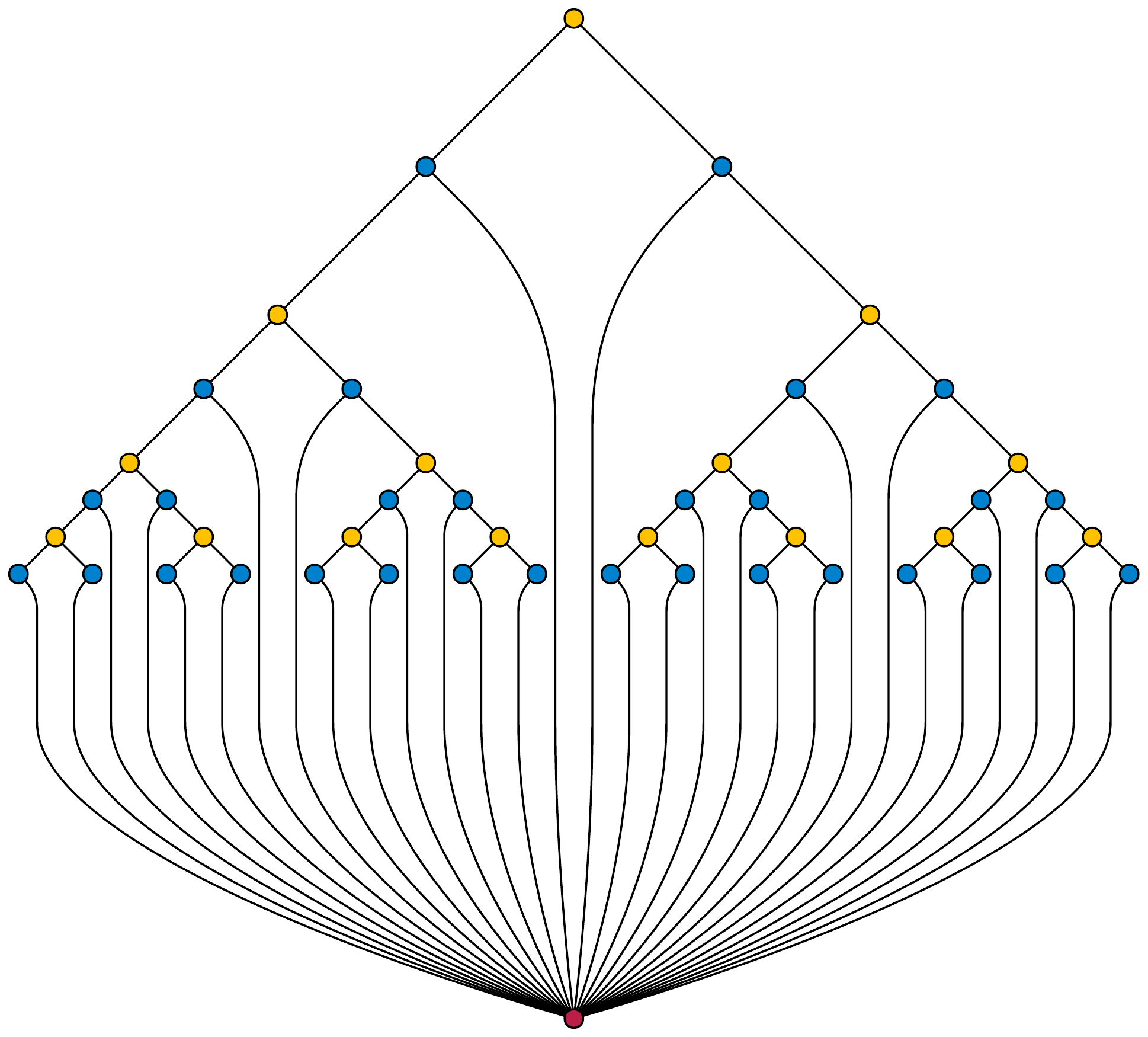}
\caption{An apex-tree graph, subcubic except at the apex, that cannot be drawn on few lines.}
\label{fig:apex-tree}
\end{figure}

As with the earlier series-parallel graphs, we will prove that these graphs cannot be drawn on a sublogarithmic number of lines. An obstacle to the proof, however, is that they contain no  $(p,r)$-nests for $r>1$, nor can any such nest exist in any apex-tree. The reason is that, in an apex-tree, all cycles contain the apex. Therefore, there can be no two disjoint cycles, and no nests of two or more disjoint cycles. Nevertheless, these graphs do contain nest-like structures. We define a \emph{$(p,r)$-near nest} in an embedded plane graph to be a collection of $p$ $r$-cycles, plus one additional vertex (the egg), such that all cycles contain the egg in their interior, the cycles are edge-disjoint, and any two of them share at most one vertex with each other. Then the following lemma is an analogue of \autoref{lem:deep-nest} for near-nests:

\begin{lemma}
\label{lem:deep-near-nest}
Let $\mathcal{A}$ be an arrangement of $\ell$ lines, let $p$ and $r$ be positive integers, and
suppose that $r-1>\ell\cdot\lfloor p/2\rfloor$.
Then it is not possible to draw a $(p,r)$-near-nest in such a way that the polygon vertices of the nest and its egg all lie on lines of $\mathcal{A}$.
\end{lemma}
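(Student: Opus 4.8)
The plan is to reproduce the proof of \autoref{lem:deep-nest} essentially verbatim, with one quantitative change: each inner cycle will be shown to consume a single segment rather than two. Arguing by contradiction, suppose a $(p,r)$-near-nest is drawn with every cycle vertex and the egg on the lines of $\mathcal{A}$; as in the setting where \autoref{lem:deep-nest} is applied, assume also that the outermost cycle encloses no crossing of $\mathcal{A}$. Regard this outermost cycle as a simple $p$-gon $P$. Then \autoref{cor:few-segs} bounds the system $\mathcal{S}$ of open segments cut from the interior of $P$ by $|\mathcal{S}|\le\ell\cdot\lfloor p/2\rfloor$, and \autoref{lem:region-tree} organizes the resulting cells into a tree, each segment being uncrossed and hence a single tree edge. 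Because the cycles are edge-disjoint, pairwise meet in at most one vertex, and all contain the egg, they are nested, touching only at shared vertices; write them as $C_1=P,C_2,\dots,C_r$ from outside inward.

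The single new ingredient is a near-nest analogue of \autoref{lem:use-two}. That lemma assumes the inner polygon is disjoint from $P$ and then extracts two segments of $\mathcal{S}$ disjoint from its interior, one coming from each of the two leaves of the region tree. In a near-nest, however, the inner cycle may share a single vertex with the cycle that encloses it. The plan is to rerun the same leaf argument and to observe that one shared vertex can grant the inner cycle access to at most one of the two leaf regions, so at least one leaf segment still cannot be crossed and therefore remains disjoint from the interior of the inner cycle. This is exactly why the factor $2$ of \autoref{lem:deep-nest} weakens to a factor $1$.

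Granting this, the remainder is the counting step of \autoref{lem:deep-nest} with $2$ replaced by $1$: each of the $r-1$ inner cycles accounts for at least one segment of $\mathcal{S}$, and the same bookkeeping that makes these segments distinct in \autoref{lem:deep-nest} applies unchanged, so $|\mathcal{S}|\ge r-1$. This contradicts $|\mathcal{S}|\le\ell\cdot\lfloor p/2\rfloor<r-1$, which is the hypothesis of the lemma, completing the argument.

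I expect the delicate point to be the modified \autoref{lem:use-two}, and specifically the claim that one shared vertex compromises only one leaf and never both. The region-tree machinery of \autoref{lem:region-tree} and \autoref{lem:use-two} was built for polygons that are strictly disjoint, so I would need to check that two cycles meeting at a single vertex still interact with the cell structure as required: the edges incident to the shared vertex leave it along segments into the interior, and I must rule out their threading into a second leaf region. Once the one-vertex case of the leaf argument is secured, the distinctness of the charged segments and the final count follow the pattern already established for \autoref{lem:deep-nest}.
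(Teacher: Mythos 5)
Your proposal is correct and follows essentially the same route as the paper's own proof: the paper likewise argues by contradiction, bounds $\mathcal{S}$ via \autoref{cor:few-segs}, and weakens the charge per inner cycle from two segments to one on the grounds that a single shared vertex can compromise at most one of the two extreme (leaf) segments produced by \autoref{lem:use-two}. The delicate point you flag is exactly the fact the paper invokes---``at most one of the two extreme segments of the polygon can be shared with other polygons interior to it''---and the paper treats it at the same level of brevity as you do, so both the key modification and the final counting coincide.
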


\begin{proof}
Suppose for a contradiction that we have drawn a $(p,r)$-near-nest with all points on lines of $\mathcal{A}$. Let $\mathcal{S}$ be the system of disjoint open line segments formed by intersecting the lines of $\mathcal{A}$ with the outer polygon of the nest. Then $|S|\le \ell\cdot\lfloor p/2\rfloor$ by \autoref{cor:few-segs}, and each of the $r-1$ remaining polygons of the nest use up at least one of the segments of $\mathcal{S}$ by \autoref{lem:use-two} and by the fact that at most one of the two extreme segments of the polygon can be shared with other polygons interior to it. Therefore, if $r-1>\ell\cdot\lfloor p/2\rfloor$ (as we supposed in the statement of the lemma), there will be no segments remaining for the egg to lie on. Therefore, a drawing meeting these conditions is impossible.
\end{proof}

Analogously to \autoref{lem:sp-nest} and \autoref{lem:sp-many-nests}, we have:

\begin{lemma}
\label{lem:at-many-nests}
For every planar embedding of a graph like the one of \autoref{fig:apex-tree} formed from a complete binary tree of height $i$, and for every $j\le i$,
the embedding contains $2^j-1$ disjoint $(4,i-j)$-near-nests.
\end{lemma}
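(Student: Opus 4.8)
The plan is to mirror the series-parallel argument of \autoref{lem:sp-nest} and \autoref{lem:sp-many-nests}, replacing nested hexagons by near-nested quadrilaterals. Write $\alpha$ for the apex, and for an internal tree vertex $v$ let $m_v$ be the subdivision vertex on the edge from $v$ to its parent and let $m_L,m_R$ be the subdivision vertices on the edges from $v$ to its two children (or those children themselves, when the children are leaves). The first step is a single-near-nest lemma analogous to \autoref{lem:sp-nest}: if $v$ is a non-root vertex whose subtree has height $H$, then the sub-apex-tree consisting of this subtree, the vertex $m_v$, the apex, the edge $\alpha m_v$, and all apex edges into the subtree has, in every planar embedding in which $m_v$ and $\alpha$ lie on the outer face, a $(4,H)$-near-nest whose egg is a leaf. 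The point that makes this work is that $v$ is joined to $\alpha$ by three internally disjoint length-two paths, through $m_v$, through $m_L$, and through $m_R$; these play the role of the three length-three paths between the terminals of $B_{i-1}$ in \autoref{lem:sp-nest}.

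I would prove this by induction on $H$, the case $H=0$ (a single leaf, i.e.\ a $(4,0)$-near-nest that is just its egg) being trivial. For $H\ge 1$, combining the path through $m_v$ with the path through $m_L$ gives a quadrilateral $\alpha\, m_v\, v\, m_L$; since $m_v$ and $\alpha$ are on the outer face this quadrilateral surrounds the subtree rooted at $v$'s other child, whose root presents its own parent subdivision vertex $m_R$ and the apex on the outer face of the surrounded region. The induction hypothesis there yields a $(4,H-1)$-near-nest, and adding the quadrilateral as a new outermost cycle produces a $(4,H)$-near-nest: the new cycle uses $m_v,v,m_L$ while the inner near-nest's outermost cycle uses $m_R$ and lies below it, so the two meet only at $\alpha$. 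The recursion bottoms out at the penultimate level, where the innermost quadrilateral is built from a penultimate vertex, its parent subdivision vertex, and one of its two leaf children, enclosing the other leaf child, which serves as the egg.

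To deduce \autoref{lem:at-many-nests} I would then argue as in \autoref{lem:sp-many-nests}. The complete binary tree of height $i$ has $2^j$ subtrees rooted at depth $j$, each of height $i-j$ and (since $j\ge 1$) each rooted at a vertex possessing a parent subdivision vertex. At most one of these subtrees can contain the outer face of the whole drawing; for each of the other $2^j-1$ subtrees the induced drawing presents its parent subdivision vertex and the apex on the outer face of the subtree's region, so the single-near-nest lemma supplies a $(4,i-j)$-near-nest inside it. As distinct depth-$j$ subtrees occupy disjoint regions of the plane, these near-nests are pairwise disjoint apart from their common passage through the apex, which is all that the counting argument behind \autoref{thm:treelike} requires: each outermost quadrilateral encloses a disjoint region, so at least one of the $2^j-1>\binom{\ell}{2}$ of them encloses no crossing.

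The step I expect to be the main obstacle is precisely the feature absent from the series-parallel setting: in an apex-tree every cycle runs through the apex, so no two quadrilaterals of a near-nest can be genuinely disjoint, and I must keep them sharing \emph{only} the apex. This is what forces the recursion always to combine the parent-subdivision path with a single child path and to descend through the remaining child, so that consecutive quadrilaterals use disjoint subdivision vertices; it is also why a non-root vertex (one carrying a parent subdivision vertex, supplying the crucial third path) gives depth equal to, rather than one less than, the height of its subtree. A secondary technical point, the analogue of the ``two terminals on the outer face'' hypothesis of \autoref{lem:sp-nest}, is to verify in an arbitrary embedding that the apex edges entering each depth-$j$ subtree are consecutive in the rotation at the apex, so that each non-outer subtree really does present $m_v$ and the apex on the boundary of its region; here the $2$-connectivity of the apex-tree constrains the embedding enough to make this hold.
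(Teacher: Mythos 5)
Your proposal is correct, but it is not the route the paper actually takes: you have written out, in full, the argument that the paper dismisses in one sentence as ``Alternatively, one could prove the result by repeating the proof of \autoref{lem:sp-many-nests} with minor modifications.'' The paper's own proof is a reduction rather than a fresh induction: it observes that any planar embedding of the apex-tree can be \emph{expanded} (the apex un-contracted) into a planar embedding of the corresponding series-parallel graph $B_i$, applies \autoref{lem:sp-many-nests} verbatim to obtain $2^j-1$ disjointly embedded $(6,i-j)$-nests there, and then notes that contracting back sends each $6$-cycle---three of whose vertices lie in the contracted color class---to a $4$-cycle through the apex, so the disjoint $6$-nests become $4$-near-nests meeting only at the apex. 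That reduction is shorter and inherits all the embedding bookkeeping (``at most one copy contains the outer face,'' ``the other copies have their terminals outermost'') from the series-parallel lemmas, at the cost of justifying that embeddings lift through the contraction. Your direct induction buys the opposite trade: it avoids the lifting step and makes the near-nest conditions explicit via the theta graph of three length-two paths $v\,m_v\,\alpha$, $v\,m_L\,\alpha$, $v\,m_R\,\alpha$, but it must re-establish by hand the facts you flag as the ``secondary technical point.'' Those facts do hold, essentially as you guess: the complement of each depth-$j$ sub-apex-tree is connected and attaches to it only at $m_v$ and $\alpha$, so it lies in a single face of the sub-apex-tree's induced embedding, and that face is incident to both $m_v$ and $\alpha$; moreover, since the graph is $2$-connected, the outer face of the whole drawing is bounded by a cycle, and cycles in distinct sub-apex-trees share only the apex, so at most one sub-apex-tree can enclose the outer face. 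Finally, note that both your proof and the paper's deliver near-nests that all pass through the apex, so the word ``disjoint'' in the statement must be read (as you correctly observe) as ``disjoint except at the apex,'' which is all that the counting argument in \autoref{thm:treelike} needs, since only the disjointness of the enclosed regions is used.
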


\begin{proof}
This result follows immediately from \autoref{lem:sp-many-nests},
which proves the existence of a $(6,i-j)$-nest in the corresponding series-parallel graphs,
together with the observations that every planar embedding of our apex-trees
can be expanded to a planar embedding of the corresponding series-parallel graphs,
and that every 6-cycle of a $(6,i-j)$-nest in the expanded series-parallel graph
has three of its vertices contracted into the apex of the apex-tree graph.

Alternatively, one could prove the result by repeating the proof of \autoref{lem:sp-many-nests} with minor modifications.
\end{proof}

\begin{proof}[Proof of \autoref{thm:treelike}, apex-tree case]
The theorem claims that there exists a bipartite apex-tree graph, subcubic except at its apex, that cannot be drawn on $\ell$ lines. To prove this, choose $j$ such that $2^j\ge \tbinom{\ell}{2}+2$
and choose $i$ such that $i-j-1> 2\ell$. Form an apex-tree graph as above from a complete binary tree of height~$i$.
We claim that, for this case, the resulting apex-tree graph has the required properties. For, by \autoref{lem:at-many-nests}, it contains at least $\tbinom{\ell}{2}+1$ disjointly embedded $(4,i-j)$-near-nests, enough to ensure that at least one of them does not contain any crossing points of any given arrangement of $\ell$-lines.  By \autoref{lem:deep-near-nest}, a nest of this depth that does not contain any crossing points of the $\ell$ lines cannot be drawn with its vertices on the lines.
\end{proof}

\subsection{Drawing apex-tree graphs on few lines}

Recall that \autoref{thm:apex-tree-draw} states that we can draw any apex-tree graph planarly on $O(\log n)$ parallel lines. To do so, we adapt a standard tool from tree drawing, the \emph{heavy path decomposition}~\cite{SleTar-JCSS-83}, to draw any tree with its vertices on the points of a grid of height $\log_2 n$ and width $n$ (in particular, on $O(\log n)$ horizontal lines) in such a way that the resulting drawing can be extended to a drawing of an apex-tree, by adding one more vertex adjacent to any subset of the tree vertices.

The heavy path decomposition of a tree is obtained by choosing one \emph{heavy edge} for each non-leaf vertex of the tree, an edge connecting it to the subtree with the largest number of vertices (breaking ties arbitrarily). The connected components of the subgraph formed by the heavy edges are \emph{heavy paths}, including as a special case length-zero paths for leaf vertices that were not chosen by their parent. The heavy paths partition the vertices of the tree. By induction, a vertex $v$
that can reach a leaf by a path that includes $i$ non-heavy edges must be the root of a subtree containing at least $2^{i+1}-1$ vertices (including $v$ itself). Therefore, in a tree with $n>1$ vertices,
every root-to-leaf path contains at most $\log_2 n-1$ non-heavy edges.

\begin{figure}[t]
\centering\includegraphics[width=0.5\textwidth]{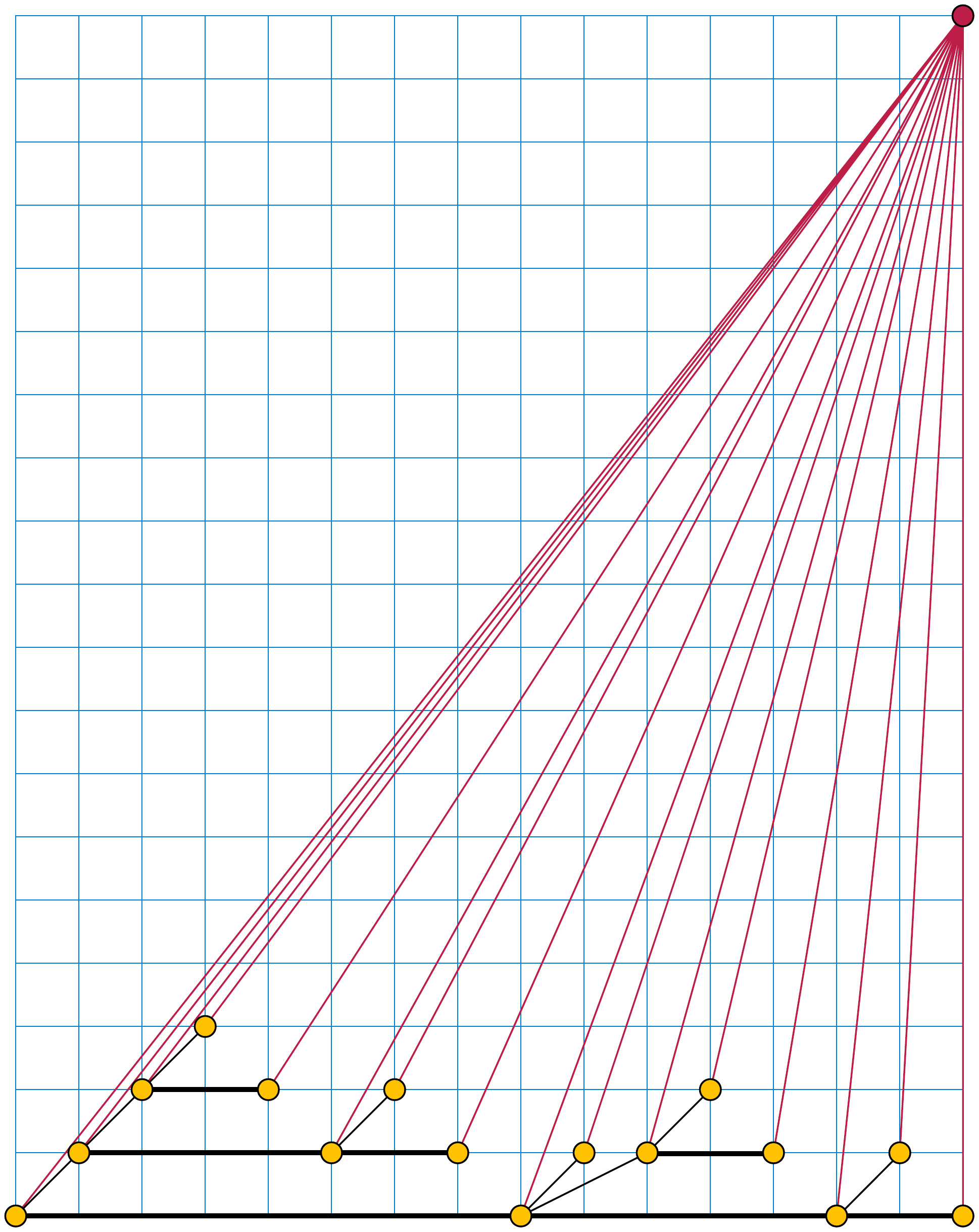}
\caption{Drawing an apex-tree on a grid. The thick horizontal black lines depict the heavy path decomposition of the given tree. Note that although the grid size is approximately $(n+\log_2 n)\times n$, only the bottom $\log_2 n$ horizontal grid lines and the top horizontal grid line are occupied by vertices.}
\label{fig:apex-tree-grid}
\end{figure}

\begin{proof}[Proof of \autoref{thm:apex-tree-draw}]
To draw the given tree on a grid, we traverse the tree in preorder, ordering the children at each vertex so that the heavy edge is last. We let the $x$-coordinate of each vertex be its position in this preorder listing, and we let the $y$-coordinate be the number of non-heavy edges on the path from the vertex to the root. These choices give unique coordinates for each vertex on a grid of height $\log_2 n$ and width $n$, as claimed.
Each tree edge either connects two consecutive vertices on the same level of the grid (on the same heavy path), or it connects vertices on consecutive levels (a parent and child not connected by a heavy edge)
whose $x$-coordinates both are less than the next vertex on the same level as the parent.
Therefore, the drawing has no crossings.

All edges of this tree drawing have slope in the interval $[0,1]$. The traversal ordering ensures that, for each vertex $v$, and each vertex $w$ with a higher $y$-coordinate than $v$, one of the following is true: $w$ has smaller $x$-coordinate than $v$, $w$ is a descendant of $v$, or $w$ is a descendant of a vertex that is placed below and to the right of~$v$. In all three cases, neither $w$ nor any edge incident to $w$ can block the visibility from $v$ upwards and to the right through lines of slope greater than one.
Therefore, if we place the apex $n+1$ units above the upper right corner of the grid, it will be visible to all tree vertices by unobstructed lines of sight and we can complete the drawing of any  apex-tree consisting of the given tree and one apex.
\end{proof}

The construction is depicted in \autoref{fig:apex-tree-grid}.

\section{Conclusions and open problems}

We have found planar 3-regular bipartite graphs of size cubic in $\ell$ that cannot be drawn on $\ell$ lines, cubic bipartite series-parallel graphs of size exponential in $\ell$ that cannot be drawn on $\ell$ lines, and apex-trees of size exponential in $\ell$ that cannot be drawn on $\ell$ lines. For apex-trees the exponential size bound is necessary, although there may still be room for tightening the gap between the exponential upper and lower bounds. For the other two classes of graphs, we do not know whether our results are tight.  Stefan Felsner and Alexander Wolff have recently proven that every 4-vertex-connected maximal planar graph of size at most quadratic in $\ell$ may be drawn on $\ell$ pseudolines, and that it is NP-hard to find drawings on two lines~\cite{BieEvaFel-ms-19}.
Do there exist planar graphs of subcubic size that cannot be drawn on $\ell$ lines? Do there exist series-parallel graphs of polynomial size that cannot be drawn on $\ell$ lines? How well can the optimal number of lines be approximated? We leave these problems as open for future research.

\bibliographystyle{plainurl}
\bibliography{many-lines}
\end{document}